\newcommand{\be}[1]{\begin{equation}\label{#1}}
\newcommand{\ee}{\end{equation}}
\newcommand{\bc}{\begin{center}}
\newcommand{\ec}{\end{center}}
\newcommand{\cC}{{\cal C}}
\newcommand{\cD}{{\cal D}}
\newcommand{\cE}{{\cal E}}
\newcommand{\cS}{{\cal S}}
\newcommand{\bfb}{{\boldsymbol b}}
\newcommand{\bfc}{{\boldsymbol c}}
\newcommand{\bfd}{{\boldsymbol d}}
\newcommand{\bfg}{{\boldsymbol g}}
\newcommand{\bfq}{{\boldsymbol q}}
\newcommand{\bfs}{{\boldsymbol s}}
\newcommand{\bft}{{\boldsymbol t}}
\newcommand{\bfu}{{\boldsymbol u}}
\newcommand{\bfv}{{\boldsymbol v}}
\newcommand{\bfw}{{\boldsymbol w}}
\newcommand{\bfx}{{\boldsymbol x}}
\newcommand{\bfy}{{\boldsymbol y}}
\newcommand{\bfz}{{\boldsymbol z}}
\newcommand{\BFX}{{\boldsymbol X}}
\newcommand{\bfX}{{\mathbf X}}
\newcommand{\bfZ}{{\mathbf Z}}
\renewcommand{\leq}{\leqslant}
\renewcommand{\geq}{\geqslant}
\newcommand{\Cref}[1]{Co\-rol\-la\-ry\,\ref{#1}}
\theoremstyle{plain} \theorembodyfont{\normalfont\slshape}
\newtheorem{thm}{Theorem$\!$}
\newenvironment{theorem}{\begin{thm}\hspace*{-1ex}{\bf.}}{\end{thm}}
\newtheorem{prop}{Proposition$\!$}
\newtheorem{lem}{Lemma$\!$}
\newenvironment{lemma}{\begin{lem}\hspace*{-1ex}{\bf.}}{\end{lem}}
\newtheorem{cor}{Corollary$\!$}
\newenvironment{corollary}{\begin{cor}\hspace*{-1ex}{\bf.}}{\end{cor}}
\newtheorem{cl}{Claim$\!$}
\newtheorem{defi}{Definition$\!$}
\newenvironment{definition}{\begin{defi}\hspace*{-1ex}{\bf .}}{\end{defi}}
\newtheorem{const}{Construction$\!$}
\newtheorem{algr}{Algorithm$\!$}
\newtheorem{exam}{Example$\!$}
\newenvironment{example}{\begin{exam}\hspace*{-1ex}{\bf .}}{\end{exam}}
\newtheorem{remrk}{Remark$\!$}
\newenvironment{remark}{\begin{remrk}\hspace*{-1ex}{\bf .}}{\end{remrk}}
\newlength{\paragraphindent}
\newlength{\widthone}
\newlength{\widthtwo}
\newlength{\widththree}
\newlength{\colwidthtemp}
\definecolor{Codecolor}{named}{White}  
\newcommand{\Copen}{\mbox{\{\kern-5.50pt\{}}
\newcommand{\Cclose}{\mbox{\}\kern-5.50pt\}}}
\newcommand{\Cslash}{\mbox{$\backslash\kern-6.02pt\backslash$}}
\begin{document}

\title{$\,$
{Time-Space Constrained Codes for Phase-Change Memories}}
\author{\large Minghai~Qin,~\IEEEmembership{Student Member,~IEEE,} Eitan~Yaakobi,~\IEEEmembership{Member,~IEEE,} Paul~H.~Siegel,~\IEEEmembership{Fellow,~IEEE,}

\thanks{
This work was presented in part at the {\em IEEE Global Telecommunications Conference (GLOBECOM 2011)}, Texas, USA, (December 2011), pp.\ 1-6  in the proceedings.

M.\ Qin, and P.\ H.\ Siegel are with the Department of Electrical and Computer Engineering and the Center for Magnetic Recording Research, University of California at San Diego, La Jolla, CA 92093, U.S.A. (e-mail: \{mqin,eyaakobi,psiegel\}@ucsd.edu).

E. Yaakobi is with the Department of Electrical Engineering, California Institute of Technology, Pasadena, CA 91125 USA, and also with the Department of Electrical and Computer Engineering and the Center for Magnetic Recording Research, University of California, San Diego, La Jolla, CA 92093 USA (e-mail: eyaakobi@ucsd.edu).

This research was supported in part by the ISEF Foundation, the Lester Deutsch Fellowship, the University of California Lab Fees Research Program, Award No. 09-LR-06-118620-SIEP, the National Science Foundation under Grant  CCF-1116739, and the Center for Magnetic Recording Research at the University of California, San Diego.
}
}
\maketitle

\begin{abstract}
Phase-change memory (PCM) is a promising non-volatile solid-state memory technology. A PCM cell stores data by using its amorphous and crystalline states. The cell changes between these two states using high temperature. However, since the cells are sensitive to high temperature, it is important, when programming cells, to balance the heat both in time and space.

In this paper, we study the time-space constraint for PCM, which was originally proposed by Jiang et al. A code is called an \emph{$(\alpha,\beta,p)$-constrained code} if for any $\alpha$ consecutive rewrites and for any segment of $\beta$ contiguous cells, the total rewrite cost of the $\beta$ cells over those $\alpha$ rewrites is at most $p$. Here, the cells are binary and the rewrite cost is defined to be the Hamming distance between the current and next memory states. First, we show a general upper bound on the achievable rate of these codes which extends the results of Jiang et al. Then, we generalize their construction for $(\alpha\geq 1, \beta=1,p=1)$-constrained codes and show another construction for $(\alpha = 1, \beta\geq 1,p\geq1)$-constrained codes. Finally, we show that these two constructions can be used to construct codes for all values of $\alpha$, $\beta$, and $p$.
\end{abstract}

\section{Introduction}\label{sec:introduction}\noindent
Phase-change memory (PCM) devices are a promising technology for non-volatile memories. Like a flash memory, a PCM consists of cells that can be in distinct physical states. In the simplest case, the PCM cell has two possible states, an amorphous state and a crystalline state. Multiple-bit per cell PCMs can be implemented by using partially crystalline states~\cite{B_etal10}.

While in a flash memory one can decrease a cell level only by erasing the entire block of about $10^6$ cells that contains it, in a PCM one can independently decrease an individual cell level -- but only to level zero. This operation is called a RESET operation. A SET operation can then be used to change the cell state to any valid level. Therefore, in order to decrease a cell level from one non-zero value to a smaller non-zero value, one needs to first RESET the cell to level zero, and then SET it to the new desired level~\cite{B_etal10}. Thus, as with flash memory programming, there is a significant asymmetry between the two operations of increasing and decreasing a cell level.

As in a flash memory, a PCM cell has a limited lifetime; the cells can tolerate only about $10^7 - 10^8$ RESET operations before beginning to degrade~\cite{FW08}. Therefore, it is still important when programming cells to minimize the number of RESET operations. Furthermore, a RESET operation can negatively affect the performance of a PCM in other ways. One of them is due to the phenomenon of thermal crosstalk. When a cell is RESET, the levels of its adjacent cells may inadvertently be increased due to heat diffusion associated with the operation~\cite{B_etal10,P_etal04}. Another problem, called thermal accumulation, arises when a small area is subjected to a large number of program operations over a short period of time~\cite{B_etal10,P_etal04}. The resulting accumulation of heat can significantly limit the minimum write latency of a PCM, since the programming accuracy is sensitive to temperature. It is therefore desirable to balance the thermal accumulation over a local area of PCM cells in a fixed period of time. Coding schemes can help overcome the performance degradation resulting from these physical phenomena. Lastras et al.~\cite{L_etal09} studied the capacity of a Write-Efficient Memory (WEM)~\cite{AZ89} for a cost function that is associated with the write model of phase-change memories described above.


Jiang et al.~\cite{JBL10} have proposed codes to mitigate thermal cross-talk and heat accumulation effects in PCM. Under their thermal cross-talk model, when a cell is RESET, the levels of its immediately adjacent cells may also be increased. Hence, if these neighboring cells exceed their target level, they also will have to be RESET, and this effect can then propagate to many more cells. In~\cite{JBL10}, they considered a special case of this and proposed the use of constrained codes to limit the propagation effect. Capacity calculations for these codes were also presented.

The other problem addressed in~\cite{JBL10} is that of heat accumulation. In this model, the {\em rewrite cost} is defined to be the number of programmed cells, i.e., the Hamming distance between the current and next cell-state vectors. A code is said to be \emph{$(\alpha,\beta,p)$-constrained} if for any $\alpha$ consecutive rewrites and for any segment of $\beta$ contiguous cells, the total rewrite cost of the $\beta$ cells over those $\alpha$ rewrites is at most $p$. A specific code construction was given for the $(\alpha\geq 1, \beta =1, p=1)$-constraint as well as an upper bound on the achievable rate of codes for this constraint. An upper bound on the achievable rate was also given for $(\alpha= 1, \beta \geq 1, p=1)$-constrained codes.

The work in~\cite{JBL10} dealt with only a few instances of the parameters $\alpha,\beta$ and $p$. In this paper, we extend the code constructions and achievable-rate bounds to a larger portion of the parameter space. In Section~\ref{sec:preliminaries}, we formally define the constrained-coding problem for PCM. In Section~\ref{sec:upper}, using connections to two-dimensional constrained coding, we present a scheme to calculate an upper bound on the achievable rate for all values of $\alpha,\beta$ and $p$. If the value of $\alpha$ or $\beta$ is 1 then the two-dimensional constraint becomes a one-dimensional constraint and we calculate the upper bound on the achievable rate for all values of $p$. This result coincides with the result in~\cite{JBL10} for $(\alpha\geq 1, \beta =1, p=1)$ and $(\alpha= 1, \beta \geq 1, p=1)$. We also derive upper bounds for some cases with parameters satisfying $(\alpha> 1, \beta > 1, p=1)$ using known results on the upper bound of two-dimensional constrained codes. In Section~\ref{sec:lower bound}, code constructions are given. First, a trivial construction is given and we show an improvement for $(\alpha = 1,\beta\geq1,p\geq1)$-constrained codes and extend the construction in~\cite{JBL10} of $(\alpha\geq 1,\beta=1,p=1)$-constrained codes to arbitrary $p$. Finally, we show how to extend the constructions for all values of $\alpha,\beta$ and $p$.

\section{Preliminaries}\label{sec:preliminaries}\noindent
In this section, we give a formal definition of the constrained-coding problem. The number of cells is denoted by $n$ and the memory cells are binary. The cell-state vectors are the binary vectors from $\{0,1\}^n$. If a cell-state vector $\bfu=(u_1,\ldots,u_n)\in\{0,1\}^n$ is rewritten to another cell-state vector $\bfv=(v_1,\ldots,v_n)\in\{0,1\}^n$, then the rewrite cost is defined to be the Hamming distance between $\bfu$ and $\bfv$, that is
$$ d_H(\bfu,\bfv) = |\{i\ : \ u_i\neq v_i, 1\leq i\leq n\}|.$$
The Hamming weight of a vector $\bfu$ is $wt(\bfu) = d_H(\bfu,\textbf{0})$. The complement of a vector $\bfu$ is
$\overline{\bfu}=(\overline{u}_1,\ldots,\overline{u}_n)$. For a vector $\bfx=(x_1,\ldots,x_n)$, we define for all $1\leq p\leq q\leq n$, $\bfx_p^q=(x_p,x_{p+1},\ldots,x_q)$. The set $\{i,i+1,\ldots,j\}$ is denoted by $[i:j]$ for $i\leq j$, and in particular, $\{1,2,\ldots,\lfloor 2^{nR} \rfloor\}$ is denoted by $[1:2^{nR}]$ for an integer $n$ and real $R$.

\begin{definition}
Let $\alpha,\beta,p$ be positive integers. A code $\cC$ satisfies the \textbf{$\boldsymbol{(\alpha,\beta,p)}$-constraint} if for any $\alpha$ consecutive rewrites and for any segment of $\beta$ contiguous cells, the total rewrite cost of those $\beta$ cells over those $\alpha$ rewrites is at most $p$, and $\cC$ is called an \textbf{$\boldsymbol{(\alpha,\beta,p)}$-constrained code}. That is, if $\bfv_i= (v_{i,1},\ldots,v_{i,n})$, for $i\geq 1$, is the cell-state vector on the $i$-th write, then, for all $i\geq 1$ and $1\leq j\leq n-\beta+1$,
\[
\Big| \{(k,\ell):v_{i+k,j+\ell}\neq v_{i+k+1,j+\ell}, 0\leq k < \alpha, 0\leq \ell<\beta\}\Big|\leq p.
\]
\end{definition}
We will specify $(\alpha,\beta,p)$-constrained codes by an explicit construction of their encoding and decoding maps. On the $i$-th write, for $i\geq 1$, the encoder
\[
\cE_i: [1:2^{nR_i}] \times\{0,1\}^n\mapsto \{0,1\}^n
\]
maps the new information symbol and the current cell-state vector to the next cell-state vector. The decoder
\[
\cD_i:\{0,1\}^n \mapsto [1:2^{nR_i}]
\]
maps the cell-state vector to the represented information symbol. We denote the \emph{individual rate} on the $i$-th write of the $(\alpha,\beta,p)$-constrained code by $R_i$. Note that the alphabet size of the messages on each write does not have to be the same. The \emph{rate} $R$ of the $(\alpha,\beta,p)$-constrained code is defined as

\begin{equation}\label{eq:rate2}
R= \lim_{m\rightarrow \infty}\frac{\sum_{i=1}^mR_i}{m}.
\end{equation}

We assume that the number of writes is large and in the constructions we present there will be a periodic sequence of writes. Thus, it will be possible to change any $(\alpha,\beta,p)$-constrained code $\cC$ with varying individual rates to an $(\alpha,\beta,p)$-constrained code $\cC'$ with fixed individual rates such that the rates of the two constrained codes are the same. This can be achieved by using multiple copies of the code $\cC$ and in each copy of $\cC$ to start writing from a different write within the period of writes. Therefore, we assume that there is no distinction between the two cases and the rate is as defined in Equation~(\ref{eq:rate2}), which is the number of bits written per cell per write.

The encoding and decoding maps can be either the same on all writes or can vary among the writes. In the latter case, we will need more cells in order to index the write number. However, arguing as in~\cite{YKSVW10}, it is possible to show that these extra cells do not reduce asymptotically the rate and therefore we assume here that the encoder and decoder know the write number.

A rate $R$ is called an \textbf{\emph{$\boldsymbol{(\alpha,\beta,p)}$-achievable rate}} if there exists an $(\alpha,\beta,p)$-constrained code $\cC$ such that the rate of $\cC$ is $R$. We denote by $C_{n}(\alpha,\beta,p)$ the supremum of all $(\alpha,\beta,p)$-achievable rates while fixing the number of cells to be $n$. The \textbf{\emph{$\boldsymbol{(\alpha,\beta,p)}$-capacity}} of the $(\alpha,\beta,p)$-constraint is denoted by $\boldsymbol{C(\alpha,\beta,p)}$ and is defined to be
$$C(\alpha,\beta,p) = \lim_{n\rightarrow \infty}C_{n}(\alpha,\beta,p).$$

Our goal in this paper is to give lower and upper bounds on the $(\alpha,\beta,p)$-capacity, $C(\alpha,\beta,p)$, for all values of $\alpha,\beta,$ and $p$. Clearly, if $p\geq\alpha\beta$ then $C(\alpha,\beta,p)=1$. So we assume throughout the paper that $p<\alpha\beta$. Lower bounds will be given by specific constrained code constructions while the upper bounds will be derived analytically using tools drawn from the theory of one- and two-dimensional constrained codes.

\section{Upper Bound on the Capacity}\label{sec:upper}\noindent
In this section, we will present upper bounds on the $(\alpha,\beta,p)$-capacity obtained using techniques from the analysis of two-dimensional constrained codes. There are a number of two-dimensional constraints that have been extensively studied, e.g., 2-dimensional $(d,k)$-runlength-limited (RLL)~\cite{KZ99,SW92}, no isolated bits~\cite{HCRSW04,FL06}, and the checkerboard constraint~\cite{WB98,NZ03}. Given a two-dimensional constraint $S$, its capacity is defined to be
\[
C_{2D}(S) = \lim_{m,n\rightarrow\infty}\frac{\log_2c_S(m,n)}{mn},
\]
where $c_S(m,n)$ is the number of $m\times n$ arrays that satisfy the constraint $S$. The constraint of interest for us in this work is the one where in each rectangle of size $a\times b$, the number of ones is at most $p$.

\begin{definition}
Let $a,b,p$ be positive integers. An $(m\times n)$-array $A=(a_{i,j})_{1\leq i\leq m,1\leq j\leq n}\in\{0,1\}^{m\times n}$ is called an {\em $(a,b,p)$-array} if in each sub-array of $A$ of size $a\times b$, the number of 1's is at most $p$. That is, for all $1\leq i\leq m-a+1$, $1\leq j\leq n-b+1$,
\[
\big|\{(k,\ell)\ :\ 0\leq k\leq a-1,0\leq \ell\leq b-1, a_{i+k,j+\ell}=1\}\big| \leq p.
\]
The capacity of the constraint is denoted by $C_{2D}(a,b,p)$.
\end{definition}

Note that when $p=1$, the $(a,a,1)$ constraint coincides with the square checkerboard constraint of order $a-1$~\cite{WB98}.

The connection between the capacity of the two-dimensional constraint $C_{2D}(a,b,p)$ and the $(\alpha,\beta,p)$-capacity is the following.
\begin{theorem}\label{th:upper}
For all $\alpha,\beta,p$, $C(\alpha,\beta,p)\leq C_{2D}(\alpha,\beta,p)$.
\end{theorem}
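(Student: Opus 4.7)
The plan is to turn a length--$m$ history of writes into a single two-dimensional binary array and show that the time-space rewrite constraint translates verbatim into the array constraint that defines $C_{2D}(\alpha,\beta,p)$.

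First, I would fix $n$ and consider an arbitrary $(\alpha,\beta,p)$-constrained code on $n$ cells that uses $m$ writes, with individual rates $R_1,\dots,R_m$. Given an initial state $\bfv_0\in\{0,1\}^n$ and a message tuple $(w_1,\dots,w_m)\in \prod_{i=1}^m [1:2^{nR_i}]$, the successive encoders $\cE_i$ produce states $\bfv_1,\dots,\bfv_m$. I would then introduce the \emph{change array} $D\in\{0,1\}^{m\times n}$ defined by $D_{i,j}=1$ iff $v_{i-1,j}\neq v_{i,j}$. Reading the definition of the $(\alpha,\beta,p)$-constraint row-by-row, the number of ones inside any $\alpha\times\beta$ window of $D$ equals precisely the quantity bounded in the definition of the constraint, so $D$ is an $(\alpha,\beta,p)$-array in the sense of the preceding definition.

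Next comes the counting step. For each fixed $\bfv_0$, the map $(\bfv_1,\dots,\bfv_m)\mapsto D$ is a bijection between $\bfv_0$-reachable state sequences and $m\times n$ binary arrays, so there are at most $c_S(m,n)$ valid state sequences, where $S$ is the $(\alpha,\beta,p)$-array constraint. On the other hand, the decoders $\cD_i$ recover $w_i$ from $\bfv_i$, so distinct message tuples must yield distinct state sequences. This gives
\[
\prod_{i=1}^m \big\lfloor 2^{nR_i}\big\rfloor \;\leq\; c_S(m,n).
\]
Taking $\log_2$, dividing by $mn$, and sending $m\to\infty$ yields $R\leq \limsup_{m\to\infty}\frac{\log_2 c_S(m,n)}{mn}$ for every fixed $n$, and then sending $n\to\infty$ gives $C_n(\alpha,\beta,p)\leq C_{2D}(\alpha,\beta,p)$ for each $n$, hence the theorem.

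The conceptual core -- the bijection between state sequences and change arrays that carries the rewrite constraint to the array constraint -- is essentially immediate. The only real care point is the exchange of limits: the rate $R$ is defined as $\lim_{m\to\infty}\frac{1}{m}\sum R_i$ with $n$ fixed, whereas $C_{2D}$ is a joint limit as $m,n\to\infty$. I would justify the interchange by noting the subadditivity $c_S(m_1+m_2,n)\leq c_S(m_1,n)\,c_S(m_2,n)$ (restriction of a valid $(\alpha,\beta,p)$-array to a horizontal slab is still valid), so Fekete's lemma makes $\lim_{m\to\infty}\frac{\log_2 c_S(m,n)}{mn}$ exist and equal $\inf_m \frac{\log_2 c_S(m,n)}{mn}$ for each $n$; taking $n\to\infty$ then matches the defining joint limit of $C_{2D}(\alpha,\beta,p)$. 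Aside from this standard limit argument, everything else is direct bookkeeping.
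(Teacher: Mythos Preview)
Your proof is correct and follows essentially the same route as the paper: form the $m\times n$ change array $a_{i,j}=v_{i,j}+v_{i-1,j}$, observe that the rewrite constraint makes it an $(\alpha,\beta,p)$-array, and compare the number of message tuples to the number of such arrays. Your treatment is in fact a bit more careful than the paper's, which simply asserts $c_S(m,n)\le 2^{mn\,C_{2D}(\alpha,\beta,p)}$ ``for $m,n$ large enough,'' whereas you justify the limit interchange via subadditivity and Fekete's lemma.
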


\begin{proof}
Let $\cC$ be an $(\alpha,\beta,p)$-constrained code of length $n$. For any sequence of $m$ writes, let us denote by $\bfv_i$, for $i\geq 0$, the cell-state vector on the $i$-th write, where $\bfv_0$ is the all-zero vector. The $m\times n$-array $A=(a_{i,j})$ is defined to be
\[
a_{i,j} = v_{i,j}+v_{i-1,j},
\]
where the addition is a modulo 2 sum. That is, $a_{i,j} = 1$ if and only if the $j$-th cell is changed on the $i$-th write. Since $\cC$ is an $(\alpha,\beta,p)$-constrained code, for all $1\leq i\leq m-\alpha $ and $1\leq j\leq n-\beta+1$,
\[
\big| \{(k,\ell):v_{i+k,j+\ell}\neq v_{i+k+1,j+\ell}, 0\leq k < \alpha, 0\leq \ell<\beta\}\big|\leq p,
\]
and therefore
\[
\big|\{(k,\ell)\ :\ 0\leq k\leq \alpha-1,0\leq \ell\leq \beta-1, a_{i+k,j+\ell}=1\}\big| \leq p.
\]
 Thus, $A$ is an $(\alpha,\beta,p)$-array of size $m\times n$.

Every write sequence of the code $\cC$ corresponds to an $(\alpha,\beta,p)$-array and thus the number of write sequences of length $m$ is at most the number of $(\alpha,\beta,p)$-arrays, which is upper bounded by $2^{mnC_{2D}(\alpha,\beta,p)}$, for $m,n$ large enough. Hence, the number of distinct write sequences is at most $2^{mnC_{2D}(\alpha,\beta,p)}$. However, if the individual rate on the $i$-th write is $R_i$, then the total number of distinct write sequences is $\prod_{i=1}^{m}2^{nR_i}$. We conclude that
\[
\prod_{i=1}^{m}2^{nR_i} \leq 2^{mnC_{2D}(\alpha,\beta,p)}
\]
and, therefore,
\[
\frac{\sum_{i=1}^mR_i}{m} \leq C_{2D}(\alpha,\beta,p).
\]
If $m$ goes to infinity, the rate of any $(\alpha,\beta,p)$-constrained code $R$ satisfies
\[
R\leq C_{2D}(\alpha,\beta,p),
\]
i.e., $C(\alpha,\beta,p)\leq C_{2D}(\alpha,\beta,p)$.
\end{proof}

Theorem~\ref{th:upper} provides a scheme to calculate an upper bound on the $(\alpha,\beta,p)$-capacity from an upper bound on the capacity of a two-dimensional rectangular checkerboard constraint. Unfortunately, good upper bounds are known only for some special cases of the values of $\alpha,\beta,p,$ and in particular, when $p=1$. The checkerboard constraint has attracted considerable attention over the past 20 years and some lower and upper bounds on the capacity were given in~\cite{WB98,TR11,NZ03}. For instance, some upper bounds for the square checkerboard constraint are shown in~\cite{WB98}, from which we can conclude that $C(2,2,1)\leq 0.43431$ and $C(3,3,1)\leq 0.25681$.

In the rest of this section we discuss the cases where $\alpha=1$ or $\beta=1$ corresponding to one-dimensional constraints. First, let us consider the upper bound on $C(\alpha=1,\beta,p)$. We use the one-dimensional $(d,k)$-runlength-limited (RLL) constrained codes~\cite{ZW88}, where the number of 0's between adjacent 1's is at least $d$ and at most $k$. In fact, Jiang et al.~\cite{JBL10} showed that the capacity of the $(\beta-1,\infty)$-RLL constraint is an upper bound on $C(1,\beta,1)$, which is a special case of Theorem~\ref{th:upper}. The lowest curve in Fig.~\ref{figure:upper-space} shows the capacity of the $(\beta-1,\infty)$-RLL constraint. We extend the upper bounds to arbitrary $p\geq1$. First, let us generalize the definition of RLL-constrained codes.

\begin{definition}
Let $\beta,p$ be two positive integers. A binary vector $\bfu$ satisfies the \textbf{$\boldsymbol{(\beta,p)}$-window-weight-limited (WWL) constraint} if for any $\beta$ consecutive cells there are at most $p$ 1's and $\bfu$ is called a {\em $(\beta,p)$-WWL vector}. We denote the capacity of the constraint by $C_{WWL}(\beta,p)$.
\end{definition}

Note that for $p=1$, the $(\beta,1)$-WWL constraint is the $(\beta-1,\infty)$-RLL constraint. According to Theorem \ref{th:upper}, $C(1,\beta,p)$ is upper bounded by the capacity of the $(\beta,p)$-WWL constraint, $C_{WWL}(\beta,p)$. Thus, we are interested in finding the capacity of this constraint. The approach is similar to the one used in~\cite{WB98} in order to find an upper bound on the capacity of the checkerboard constraint.

\begin{definition}
A {\em merge} of two vectors $\bfu$ and $\bfv$ of the same length $n$ is a function:

$$f_n:\{0,1\}^{n}\times\{0,1\}^{n}\mapsto \{0,1\}^{n+1}\cup{\textbf{\{F\}}}.$$
If the last $n-1$ bits of $\bfu$ are the same as the first $n-1$ bits of $\bfv$, the vector $f_n(\bfu,\bfv)$ is the vector $\bfu$ concatenated with the last bit of $\bfv$, otherwise $f_n(\bfu,\bfv) = \textbf{F}$.
\end{definition}

\begin{definition} \label{defn:transition matrix}
Let $\beta,p$ be two positive integers. Let $S_{\beta,p}$ denote the set of all vectors of length $\beta-1$ having at most $p$ 1's. That is,
$ S_{\beta,p} =  \{\bfs\in\{0,1\}^{\beta-1}:wt(\bfs)\leq p\}$. The size of the set $S_{\beta,p}$ is $M=\sum_{i=0}^{p}\binom{\beta-1}{i}$. Let $\bfs_1,\bfs_2,\ldots,\bfs_M$ be an ordering of the vectors in $S_{\beta,p}$. The {\em transition matrix} for the $(\beta,p)$-WWL constraint, $A_{\beta,p}=(a_{i,j})\in \{0,1\}^{M\times M}$ is defined as follows:

\begin{displaymath}
a_{i,j} = \left\{ \begin{array}{ll}
1 & \textrm{if $f_{\beta-1}(\bfs_i,\bfs_j)\neq \textbf{F}$ and $wt(f_{\beta-1}(\bfs_i,\bfs_j))\leq p$,}\\
0 & \textrm{otherwise.}
\end{array} \right.
\end{displaymath}
\end{definition}

\begin{example}
The following illustrates the construction of the $A_{\beta=3,p=2}$ transition matrix. Note that
\[
S_{3,2}=\{\bfs_1,\bfs_2,\bfs_3,\bfs_4\} = \{(0,0),(0,1),(1,0),(1,1)\},
\]
The merge of $\bfs_i$ and $\bfs_j$ for $i,j=1,2,3,4$ determines the matrix $A_{3,2}$. For example, $f_2(\bfs_1,\bfs_1) = (0,0,0)$, $a_{1,1}=1$; $f_2(\bfs_2,\bfs_1)={\textbf{\em F}}$, $a_{2,1} = 0$; $f_2(\bfs_1,\bfs_2) = (0,0,1)$, $a_{1,2}=1\neq a_{2,1}$. This shows that the matrix is not necessarily symmetric. Finally, $f_2(\bfs_3,\bfs_3)=(1,1,1)$, and $a_{3,3}=0$ since $(1,1,1)$ does not satisfy the (3,2)-WWL constraint.

\[
A_{3,2}=
\left(
    \begin{array}{cccc}
        1 & 1 & 0 & 0\\
        0 & 0 & 1 & 1\\
        1 & 1 & 0 & 0\\
        0 & 0 & 1 & 0
        \end{array}
\right).\\
\]
\end{example}
\begin{definition} \label{defn:irreducible}
A matrix $A\in \{0,1\}^{M\times M}$ is {\em irreducible} if for all $1\leq i,j\leq M$ there exists some $n\geq0$ such that $(A^n)_{i,j}>0$. Note that $n$ can be a function of $i$ and $j$.
\end{definition}
\begin{lemma}\label{lem:irreducible}
For positive integers $\beta,p$, the transition matrix $A_{p,\beta}$ is irreducible.
\end{lemma}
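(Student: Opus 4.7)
The plan is to show irreducibility by exhibiting, for any pair of states $\bfs_i, \bfs_j \in S_{\beta,p}$, an explicit walk of fixed length in the graph whose adjacency matrix is $A_{\beta,p}$. The key idea is to route every such walk through the all-zero state $\bfs_0 = (0,0,\ldots,0)$, which lies in $S_{\beta,p}$ since its weight is $0 \le p$.

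First I would show that from any $\bfs_i = (a_1, \ldots, a_{\beta-1})$ one can reach $\bfs_0$ in $\beta-1$ steps by \emph{shifting in zeros}. Namely, the transition from $(a_1, a_2, \ldots, a_{\beta-1})$ to $(a_2, \ldots, a_{\beta-1}, 0)$ is legal, since $f_{\beta-1}$ produces $(a_1, a_2, \ldots, a_{\beta-1}, 0)$, a $\beta$-vector of weight $wt(\bfs_i) \le p$. Iterating $\beta-1$ such shifts brings the state to $\bfs_0$.

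Symmetrically, from $\bfs_0$ one can reach any $\bfs_j = (b_1, \ldots, b_{\beta-1})$ in $\beta-1$ steps by \emph{shifting in the bits of $\bfs_j$} from left to right. After $k$ shifts the state is $(0, \ldots, 0, b_1, \ldots, b_k)$, and the merge with the next state $(0, \ldots, 0, b_1, \ldots, b_k, b_{k+1})$ is a $\beta$-vector of weight $\sum_{\ell=1}^{k+1} b_\ell \le wt(\bfs_j) \le p$, which is a valid transition.

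Concatenating these two walks gives a path of length $n = 2(\beta-1)$ from $\bfs_i$ to $\bfs_j$, and hence $(A_{\beta,p}^{2(\beta-1)})_{i,j} > 0$ for every pair $(i,j)$, establishing irreducibility. There is no serious obstacle in this argument; the only thing to verify is that each elementary shift preserves the $(\beta,p)$-WWL constraint, and this is immediate because shifting in a $0$ never increases the weight of the merged window and shifting in a bit of $\bfs_j$ keeps the running weight bounded by $wt(\bfs_j) \le p$.
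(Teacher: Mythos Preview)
Your proof is correct and follows essentially the same approach as the paper: both arguments route from $\bfs_i$ to $\bfs_j$ by padding with zeros in between. The paper states this more tersely by observing that $(A_{\beta,p}^n)_{i,j}$ counts $(\beta,p)$-WWL sequences of length $n+\beta-1$ with prefix $\bfs_i$ and suffix $\bfs_j$, and that inserting zeros between $\bfs_i$ and $\bfs_j$ yields such a sequence; your version makes the corresponding walk in the state graph explicit and verifies each transition, which is a bit more careful but not a different idea.
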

\begin{proof}[Proof of Lemma \ref{lem:irreducible}]
In our construction of $A_{\beta,p}$, it is possible to show that $(A_{\beta,p}^n)_{i,j}$ is the number of vectors of length $n+\beta-1$ starting in $\bfs_i$, ending in $\bfs_j$ and satisfying $(\beta,p)$-WWL constraint, where $\bfs_i$ and $\bfs_j$ are defined in Definition \ref{defn:transition matrix}. Therefore, $A_{\beta,p}$ is irreducible if such a vector of length $n\geq 1$ exists such that it starts with $\bfs_i$ and ends in $\bfs_j$, for every pair of $(i,j)$. Clearly it exists since adding zeros between $\bfs_i$ and $\bfs_j$ gives such a valid vector. This proves the irreducibility of $A_{\beta,p}$.
\end{proof}

The next theorem is a special case of Theorem 3.9 in~\cite{MRS98}.

\begin{theorem} \label{thm:cap-wwl}
The capacity of the $(\beta,p)$-WWL constraint is
\[
C_{WWL}(\beta,p) = \log_2(\lambda_{max}),
\]
where $\lambda_{max}$ is the largest real eigenvalue of $A_{\beta,p}$.
\end{theorem}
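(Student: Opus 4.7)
The plan is to identify the $(\beta,p)$-WWL constraint with a sofic shift whose underlying graph has adjacency matrix $A_{\beta,p}$, reduce the counting of valid sequences to counting paths in this graph, and then apply the Perron--Frobenius theorem.

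First I would establish, by induction on $k\geq 1$, that $(A_{\beta,p}^{k})_{i,j}$ equals the number of $(\beta,p)$-WWL vectors of length $\beta-1+k$ whose first $\beta-1$ bits are $\bfs_i$ and whose last $\beta-1$ bits are $\bfs_j$. The base case $k=1$ is precisely the definition of $A_{\beta,p}$: $a_{i,j}=1$ iff $f_{\beta-1}(\bfs_i,\bfs_j)$ is a valid $\beta$-window. The inductive step concatenates a length-$k$ path from $\bfs_i$ to some intermediate state $\bfs_\ell$ with a single edge to $\bfs_j$; since the $(\beta,p)$-WWL property is purely local (every $\beta$-window must have weight $\leq p$), the only new window to verify is the last one, which is valid by construction. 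Summing over starting and ending states then yields, for $n \geq \beta-1$,
\[
N(n) \;=\; \sum_{i,j=1}^{M}\bigl(A_{\beta,p}^{\,n-\beta+1}\bigr)_{i,j} \;=\; \one^{T}A_{\beta,p}^{\,n-\beta+1}\,\one,
\]
where $N(n)$ is the number of valid vectors of length $n$ and $\one$ is the all-ones column vector of length $M$; hence $C_{WWL}(\beta,p) = \lim_{n\to\infty}\tfrac{1}{n}\log_2 N(n)$.

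Next I would invoke Perron--Frobenius. By \Lref{lem:irreducible}, $A_{\beta,p}$ is irreducible, and the all-zero state carries a self-loop (the all-zero $\beta$-window is always legal), so $A_{\beta,p}$ is in fact primitive. Hence $\lambda_{\max}$ is a simple strictly dominant eigenvalue with strictly positive left and right eigenvectors $u,v$, and
\[
A_{\beta,p}^{\,n} \;=\; \lambda_{\max}^{\,n}\,uv^{T} + O(\rho^{\,n})
\]
for some $\rho<\lambda_{\max}$. Contracting against $\one^{T}$ and $\one$ gives $N(n)=c\,\lambda_{\max}^{n-\beta+1}(1+o(1))$ with $c=(\one^{T}u)(v^{T}\one)>0$, and taking logarithms divided by $n$ yields $\log_2 N(n)/n \to \log_2\lambda_{\max}$, which is the claimed capacity.

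The main obstacle is really only the bookkeeping in the first step, linking paths in the graph to valid $(\beta,p)$-WWL sequences of the correct length; once that correspondence is established, the spectral analysis is routine. If one wishes to avoid verifying primitivity, Gelfand's formula $\lim_{n\to\infty}\|A_{\beta,p}^{\,n}\|^{1/n}=\lambda_{\max}$ applied to the entry-sum norm gives $N(n)^{1/n}\to\lambda_{\max}$ directly, yielding the same conclusion without any dominant-eigenvector analysis.
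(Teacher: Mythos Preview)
Your argument is correct and is precisely the standard symbolic-dynamics proof: present the constraint by a labeled graph on the states $S_{\beta,p}$, identify $(\beta,p)$-WWL words with paths, and read off the growth rate via Perron--Frobenius (or Gelfand). The only cosmetic point is that the formula $N(n)=\one^{T}A_{\beta,p}^{\,n-\beta+1}\one$ is exact for $n\ge\beta$ rather than $n\ge\beta-1$ (for $n<\beta$ every binary string is valid, so $N(\beta-1)=2^{\beta-1}\ge M$), but this has no effect on the limit.

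As for comparison with the paper: the paper does not supply its own argument at all. It simply observes that the statement is a special case of Theorem~3.9 in~\cite{MRS98} and defers to that reference. What you have written is essentially the proof one finds there, specialized to the $(\beta,p)$-WWL presentation; your path-counting induction is in fact the same computation the paper uses earlier, in the proof of \Lref{lem:irreducible}, to interpret $(A_{\beta,p}^{n})_{i,j}$. So your route is not different in spirit---it is the standard proof the paper is citing---but it is self-contained where the paper is not, and your remark about the self-loop at the all-zero state giving primitivity (hence a strictly dominant eigenvalue) is a detail worth recording.
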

\begin{proof}
See Theorem 3.9 in~\cite{MRS98}.
\end{proof}

Fig.~\ref{figure:upper-space} shows $C_{WWL}(\beta,p)$, which is the upper bound of $C(1,\beta,p)$, for $p=1,2,3,4$ respectively.

\begin{figure}[htbp]
    \centering
\includegraphics[width=0.8\linewidth]{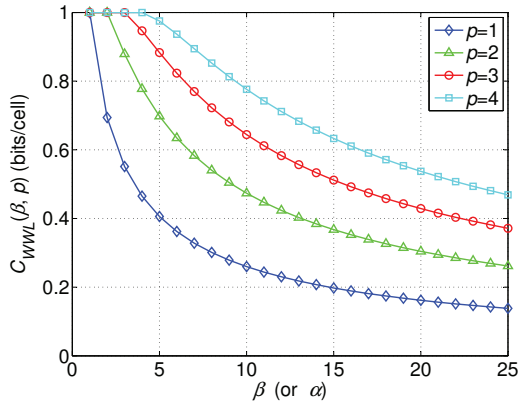}
  \caption{Upper bound on $C(1,\beta,p)$}
  \label{figure:upper-space}
\end{figure}

\begin{remark}
There is a way of presenting the $(\beta,p)$-WWL constraint using labeled graphs (state transition diagrams). We present an example of the labeled graph (transition diagram) for the $(\beta=7,p=2)$-WWL constraint in Fig.~\ref{fig:diag}. An $(\beta,p)$-WWL vector can be generated by reading off the labels along paths in the graph and the sequences in the ellipses indicate the six most recent digits generated.

\begin{figure}[htbp]
  \centering
  \includegraphics[width=0.85\linewidth]{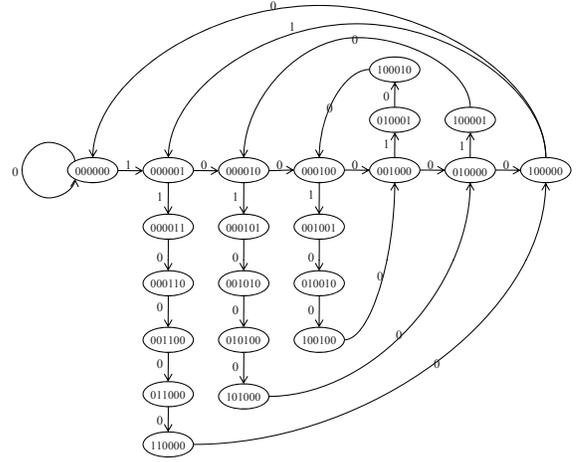}

  \caption{Labeled graphs presenting the (7,2)-WWL constraint}
  \label{fig:diag}
\end{figure}
\end{remark}

\begin{remark}
According to Theorem~\ref{th:upper}, the capacity $C(\alpha,1,p)$ is also upper bounded by the capacity of the $(\alpha,p)$-WWL constraint, $C_{WWL}(\alpha,p)$. Jiang et al.~\cite{JBL10} proposed an upper bound on the rate of an $(\alpha,1,1)$-constrained code with fixed block length $n$ and multiple cell levels. By numerical experiments, we find that their upper bound appears to converge to our upper bound for binary cells when $n \rightarrow \infty$.
\end{remark}

\section{Lower Bound on the Capacity}\label{sec:lower bound}\noindent
In this section, we give lower bounds on the capacity of the $(\alpha,\beta,p)$-constraint based upon specific code constructions. The first construction we give is a trivial one which achieves rate $\frac{p}{\alpha\beta}$. Then, we will show how to improve it for the cases $(1,\beta,p)$ and $(\alpha,1,p)$. In this section we assume that for all positive integers $x,y$ the value of $x (\bmod y)$ belongs to the group $\{1,\ldots,y\}$ via the correspondence $0 \rightarrow y$.

The idea of Construction~\ref{const:trivial} is to partition the set of $n$ cells into subblocks of size $\beta$. Suppose $p = \beta (q-1) + r$, where $1\leq q\leq \alpha$ and $1\leq r \leq \beta$. The encoding process has a period of $\alpha$ writes. On the first $q-1$ writes, all cells in each subblock are programmed with no constraint. On the $q$-th write, the first $r$ cells in each subblock are programmed with no constraint and the rest of the cells are not programmed (staying at level 0). On the $(q+1)$-st to the $\alpha$-th write, no cells are programmed. The details of the construction are as follows.

\begin{const}\label{const:trivial}
Let $\alpha,\beta,p$ be positive integers. We construct an $(\alpha,\beta,p)$-constrained code $\cC$ of length $n$ as follows. To simplify the construction, we assume that $\beta|n$. Let $q=\left\lceil\frac{p}{\beta}\right\rceil, r=p(\bmod \beta).$ For all $i\geq 1$, on the $i$-th write, the encoder uses the following rules:
\begin{itemize}
\item If $1\leq i(\bmod \alpha) < q$, $n$ bits are written to the $n$ cells.
\item If $i(\bmod \alpha) = q$, $rn/\beta$ bits are written in all cells $c_j$ such that $1\leq j(\bmod \beta) \leq r$.
\item If $i(\bmod \alpha) > q$, no information is written to the cells.
\end{itemize}
The decoder is implemented in a very similar way.
\end{const}
\begin{example}
Fig.~\ref{fig:trivial} shows a typical writing sequence of an $(\alpha=3,\beta=3,p=2)$-constrained code of length 15 based on Construction \ref{const:trivial}. The $i$-th row corresponds to the cell-state vector before the $i$-th write. The cells in the box in the $i$-th row are the only cells that can be programmed on the $i$-th write. It can be seen that the rate of the code is the ratio between the number of boxed cells and the total number of cells, which is $\frac{2}{9}$.
\begin{figure}[htbp]
  \centering
  \includegraphics[width=0.7\linewidth]{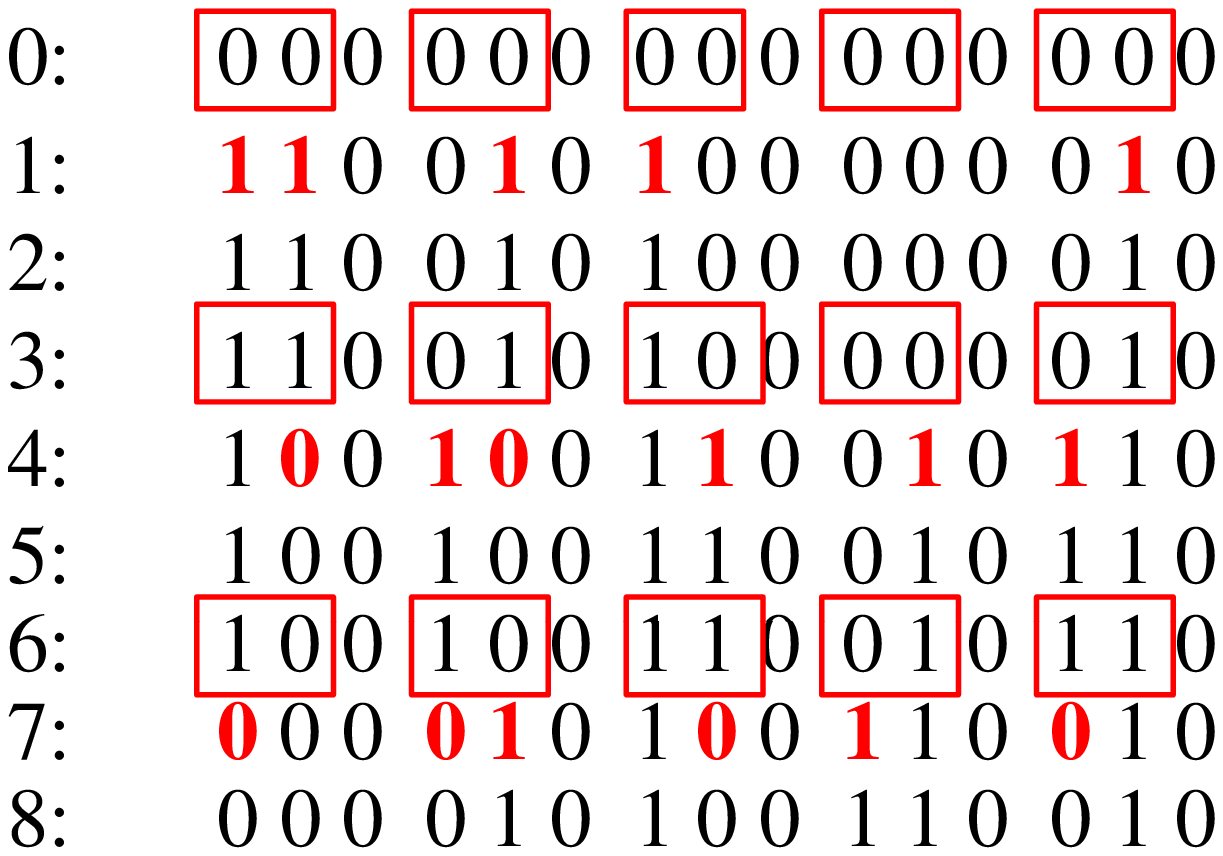}\\
  \caption{A sequence of writes of a $(3,3,2)$-constrained code}
  \label{fig:trivial}
\end{figure}
\end{example}

\begin{theorem} \label{thm:trivial}
The code $\cC$ constructed in Construction~\ref{const:trivial} is an $(\alpha,\beta,p)$-constrained code and its rate is $R=\frac{p}{\alpha\beta}$.
\end{theorem}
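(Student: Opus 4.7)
The plan is to verify separately (i) that Construction~\ref{const:trivial} achieves rate $p/(\alpha\beta)$, and (ii) that it satisfies the $(\alpha,\beta,p)$-constraint.

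For the rate, I would simply sum the per-write individual rates $R_i$ over one period of length $\alpha$. Each of the $q-1$ full writes contributes $R_i = 1$; the single partial write contributes $R_i = r/\beta$; the remaining $\alpha - q$ writes contribute $0$. Thus the sum of individual rates over one period is $(q-1) + r/\beta = (\beta(q-1) + r)/\beta = p/\beta$, and dividing by $\alpha$ (and invoking Equation~(\ref{eq:rate2}) together with periodicity) yields $R = p/(\alpha\beta)$.

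For the constraint, I would fix an arbitrary window of $\beta$ contiguous cells and $\alpha$ consecutive writes. By periodicity of the writing schedule, any window of $\alpha$ consecutive writes contains exactly $q-1$ full writes and exactly one partial write, regardless of the starting offset. Each full write changes at most $\beta$ cells in the window. The crux is to show that the partial write changes at most $r$ of the cells in the window. Writing the starting position of the window as $j\beta + t + 1$ with $0 \leq t < \beta$, the aligned case $t=0$ gives a window coinciding with one subblock, in which exactly the first $r$ cells are writable. If $t \geq 1$ the window splits across two consecutive subblocks, and a short case analysis (on whether $t \leq r$ or $t > r$) shows the number of writable cells in the window is $(r - t) + t = r$ or $0 + r = r$ respectively, so always exactly $r$. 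Adding these contributions gives total rewrite cost at most $(q-1)\beta + r = p$ in the window, as required.

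The only delicate point I anticipate is the misaligned boundary case, where the $\beta$-cell window crosses between two subblocks; the split-subblock accounting above is what makes the partial-write contribution invariant (equal to $r$) regardless of the offset $t$, so the bound $(q-1)\beta + r = p$ holds with equality in the worst case. Everything else reduces to bookkeeping with the period structure.
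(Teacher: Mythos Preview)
Your proposal is correct and follows essentially the same approach as the paper's proof: count full writes, the single partial write, and idle writes within any $\alpha$-window, then sum to get $(q-1)\beta + r = p$, and compute the rate by averaging the per-write bit counts over one period. The only difference is that the paper simply asserts that on the partial write ``at most $r$ out of these $\beta$ cells are programmed'' without justification, whereas you supply the explicit offset analysis showing the count is exactly $r$ regardless of the window's alignment with the subblock boundaries; this is a welcome addition rather than a departure.
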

\begin{proof}
We show that for all $i\geq 1$ and $1\leq j\leq n-\beta+1$, the rewrite cost of the cells $c_j,c_{j+1},\ldots,c_{j+\beta-1}$ over the writes $i,i+1,\ldots,i+\alpha-1,$ is at most $p$. For all $0\leq k\leq \alpha-1$ such that $1\leq (i+k)(\bmod \alpha) < x$, all of the $\beta$ cells can be written and since there are $x-1$ such values the rewrite cost on these writes is at most $(x-1) \beta$. For $k$, such that $(i+k)(\bmod \alpha) = x$, at most $y$ out of these $\beta$ cells are programmed and therefore the rewrite cost is at most $y$. For all other values of $k$ no other cells are programmed. Therefore, the total rewrite cost is at most $$(x-1)\cdot \beta + y = \left(\left\lceil\frac{p}{\beta}\right\rceil-1\right)\beta + (p\bmod \beta)=p.$$

The total number of bits written on these $\alpha$ writes is $pn/\beta$ and hence the rate of the code is
$$R=\frac{pn/\beta}{\alpha n} = \frac{p}{\alpha\beta}.$$
\end{proof}

\subsection{Space Constraint Improvement}\noindent
In this subsection, we improve the lower bound on $C(1,\beta,p)$ over that offered by the trivial construction. Let $\cS_n(\beta,p)$ be the set of all $(\beta,p)$-WWL vectors of length $n$. We define a \textbf{$(\beta,p)$-WWL code} $\cC_{WWL}$ of length $n$ as a subset of $\cS_n(\beta,p)$. If the size of the code $\cC_{WWL}$ is $M$, then it is specified by an encoding map $\cE_{WWL}:\{1,\ldots,M\}\mapsto\cC_{WWL}$ and a decoding map $\cD_{WWL}:\cC_{WWL}\mapsto\{1,\ldots,M\}$, such that for all $m\in\{1,\ldots,M\}$, $\cD_{WWL}(\cE_{WWL}(m)) = m$.

The problem of finding $(\beta,p)$-WWL codes that achieve the capacity $C_{WWL}(\beta,p)$ is of independent interest and we address it next. Cover~\cite{C73} provided an enumerative scheme to calculate the lexicographic order of any sequence in the constrained system.  For the special case of $p = 1$, corresponding to RLL block codes, Datta and McLaughlin~\cite{DM99,DM01} proposed enumerative methods for binary $(d,k)$-RLL codes based on permutation codes. For $(\beta,p)$-WWL codes, we find enumerative encoding and decoding strategies with linear complexity enumerating all $(\beta,p)$-WWL vectors. We present the coding schemes and the complexity analysis in Appendix~\ref{app:A}. In the sequel, we will simply assume that there exist such codes with rate arbitrarily close to the capacity as the block length goes to infinity for all positive integers $p$ and $\beta$. The next construction uses $(\beta,p)$-WWL codes to construct $(1,\beta,p)$-constrained codes.

\begin{const} \label{const:1 beta p}
Let $\beta,p$ be positive integers such that $p\leq \beta$. Let $\cC_{WWL}$ be a $(\beta,p)$-WWL code of length $n'$ and size $M$. Let $\cE_{WWL}$ and $\cD_{WWL}$ be its encoding and decoding maps. A $(1,\beta,p)$-constrained code $\cC_{1,\beta,p}$ of length $n=2n'+\beta-1$ and its encoding map $\cE$ and decoding map $\cD$ are constructed as follows.
\begin{enumerate}
\item The encoding map $\cE: \{1,\ldots,M\} \times \{0,1\}^n\rightarrow \{0,1\}^n$ is defined for all $(m,\bfu)\in\{1,\ldots,M\} \times \{0,1\}^n$ to be $\cE((m,\bfu))=\bfv$, where
    \begin{enumerate}
    \item $\bfv_1^{n'} = \bfu_1^{n'} + \cE_{WWL}(m)$.
    \item $\bfv_{n'+1}^{n'+\beta-1}=\bf0$,
    \item $\bfv_{n'+\beta}^{n}=\bfu_{1}^{n'}$,
    \end{enumerate}
\item The decoding map $\cD: \{0,1\}^n \rightarrow \{1,\ldots,M\}$ is defined for all $\bfu\in\{0,1\}^n$ to be

$$\cD(\bfu) = \cD_{WWL}(\bfv_1^{n'} + \bfv_{n'+\beta}^{n}).$$
\end{enumerate}
\end{const}
\begin{example}
Here is an example of an $(\alpha=1,\beta=3,p=2)$ code with $n'=4$ for the first 4 writes. The message set has size $M_{n'}=13$ (See the definition of $M_{n'}$ in Definition~\ref{def:order}). The length of the memory is $2n'+\beta-1=10$. Suppose on the second write, the message is $m=7$. Since lexicographically the seventh element in $\cS_4(3,2)$ is $(0110)$, the encoder will copy the previous left block $(1011)$ to the right block and flip the second and the third bits in the left block $(1011)\to(1101)$.
\[
\begin{array}{llcccc|cc|cccc}
& &0 & 0 & 0 & 0    & 0 & 0 &   0 & 0 & 0 & 0  \\
\textrm{1st write},&m=11: & \color{red}{\boldsymbol 1} & 0 & \color{red}{\boldsymbol1} & \color{red}{\boldsymbol1}    & 0 & 0 &   0 & 0 & 0 & 0  \\
\textrm{2nd write},&m=7: & 1 & \color{red}{\boldsymbol1} & \color{red}{\boldsymbol0} & 1                 & 0 & 0 &   \color{red}{\boldsymbol1} & 0 & \color{red}{\boldsymbol1} & \color{red}{\boldsymbol1}  \\
\textrm{3rd write},&m=13: & \color{red}{\boldsymbol0} & \color{red}{\boldsymbol0} & 0 & \color{red}{\boldsymbol0}    & 0 & 0 &   1 & \color{red}{\boldsymbol1} & \color{red}{\boldsymbol0} & 1  \\
\textrm{4th write},&m=4: &0 &  0 & \color{red}{\boldsymbol1} & \color{red}{\boldsymbol1}    & 0 & 0 &   \color{red}{\boldsymbol0} & \color{red}{\boldsymbol0} & 0 & \color{red}{\boldsymbol0}
 \end{array}
\]
\end{example}
\begin{theorem} \label{thm:const-1-beta-p}
The code $\cC_{1,\beta,p}$ is a $(1,\beta,p)$-constrained code. If the rate of the code $\cC_{WWL}$ is $R_{WWL}$, then the rate of the code $\cC_{1,\beta,p}$ is $\frac{n'}{2n'+\beta-1}\cdot R_{WWL}$. Both the encoder and decoder of $\cC_{1,\beta,p}$ have complexity $O(n)$.
\end{theorem}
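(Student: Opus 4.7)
The plan is to verify the three assertions of the theorem separately: the constraint property, the rate formula, and the linear complexity. Throughout, let $\bfu^{(i)}$ denote the cell-state vector after the $i$-th write, with $\bfu^{(0)}=\zero$, and let $m_i$ be the message written on the $i$-th write. The key structural invariant that I would establish by a one-line induction is
\[
\bfu^{(i)}_{n'+1}{}^{n'+\beta-1}=\zero,\qquad \bfu^{(i)}_{n'+\beta}{}^{n}=\bfu^{(i-1)}_1{}^{n'},\qquad \bfu^{(i)}_1{}^{n'}=\bfu^{(i-1)}_1{}^{n'}+\cE_{WWL}(m_i),
\]
which follows directly from the three lines of the encoding map. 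This invariant is the heart of the argument: it lets me read off what changes on the $i$-th write in each of the three regions.

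For the constraint, I would compute the write-$i$ difference $\bfu^{(i)}+\bfu^{(i-1)}$ block by block. On the left block (positions $1,\ldots,n'$) the difference is $\cE_{WWL}(m_i)$; on the middle block (positions $n'+1,\ldots,n'+\beta-1$) it is the zero vector; and on the right block (positions $n'+\beta,\ldots,n$) it is $\bfu^{(i-1)}_1{}^{n'}+\bfu^{(i-2)}_1{}^{n'}=\cE_{WWL}(m_{i-1})$ (using the invariant on the previous step, and taking $m_0$ to encode $\zero$ to handle the first write). Thus the difference vector is, in each block, either $\zero$ or a codeword of a $(\beta,p)$-WWL code. Next, I would argue that for $\beta\ge 1$ any window of $\beta$ consecutive positions in $[1\!:\!n]$ is contained in the union of left $\cup$ middle or middle $\cup$ right, because the middle block alone has length $\beta-1$ and so no $\beta$-window can reach from the left block into the right block. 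The changed positions inside such a window therefore lie in a single contiguous sub-interval of either the left or the right block, of length at most $\beta$, and so contain at most $p$ ones by the $(\beta,p)$-WWL property. This gives the $(1,\beta,p)$-constraint.

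For the rate, each write carries $\log_2 M$ information bits into $n=2n'+\beta-1$ cells, so the individual rate on every write equals $R_i=\tfrac{\log_2 M}{2n'+\beta-1}=\tfrac{n'}{2n'+\beta-1}\cdot R_{WWL}$, and substituting into the definition in~(\ref{eq:rate2}) gives the claimed rate. Decoding correctness is a one-line check: by the invariant, $\bfu^{(i)}_1{}^{n'}+\bfu^{(i)}_{n'+\beta}{}^{n}=\cE_{WWL}(m_i)$, so $\cD_{WWL}$ recovers $m_i$. For complexity, both the encoder and decoder consist of a constant number of length-$n'$ XORs/copies plus one invocation of $\cE_{WWL}$ or $\cD_{WWL}$; since Appendix~\ref{app:A} supplies $(\beta,p)$-WWL enumerative coders of linear complexity, both maps run in $O(n)$ time.

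The only subtle point, and the step I would spend the most care on, is the block-spanning window analysis in the constraint check: one must rule out a $\beta$-window simultaneously touching the left and right blocks (which is what the choice of the middle buffer length $\beta-1$ is designed for) and must also handle the edge case $i=1$ (where the ``previous left block'' is $\zero$, so the right-block change is $\zero$ and the constraint is trivially met).
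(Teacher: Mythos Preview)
Your proposal is correct and follows essentially the same approach as the paper: both arguments decompose the cell-state vector into the three blocks, observe that the per-write difference on each outer block is a $(\beta,p)$-WWL codeword while the middle buffer is identically zero, and use the buffer of length $\beta-1$ to prevent any $\beta$-window from touching both outer blocks. Your treatment is in fact a bit more explicit than the paper's on the window-spanning step and on the edge case $i=1$, but the underlying idea is identical.
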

\begin{proof}
Let $\bfu$ be the cell-state vector in Construction \ref{const:1 beta p}.
\begin{enumerate}
\item
For $\bfu_1^{n'}$, encoder step a) guarantees that the positions of rewritten cells satisfy $(\beta,p)$-WWL constraint. So there are at most $p$ reprogrammed cells in any $\beta$ consecutive cells in $\bfu_1^{n'}$.
\item
For $\bfu_{n'+\beta}^{n}$, three consecutive writes should be examined. Let $\bfw,\bfv,\bfu$ be the cell-state vectors before the $i$-th, $(i+1)$-st, $(i+2)$-nd writes, $i\geq 1$. Encoder step a) means that $\bfv_1^{n'} = \bfw_1^{n'} + \cE_{WWL}(m_i)$, where $m_i \in \{1,\ldots,M\}$ is the message to encode on the $i$-th write. Since encoder step c) guarantees that $\bfv_{n'+\beta}^{n}=\bfw_1^{n'}$ and $\bfu_{n'+\beta}^{n}=\bfv_1^{n'}$, we have $\bfu_{n'+\beta}^{n} = \bfv_{n'+\beta}^{n} + \cE_{WWL}(m_i)$. This proves that $\bfu_{n'+\beta}^{n}$ satisfies the $(1,\beta,p)$ constraint.
\item
For $\bfu_{n'+1}^{n'+\beta-1}$, the cell levels are always set to be 0, which ensures that no violation of the constraint happens between $\bfu_1^{n'}$ and $\bfu_{n'+\beta}^{n}$.
\end{enumerate}
On each write, one of $M$ messages is encoded as a vector of length $n$. Hence, the rate is $\frac{\log_2 M }{n}=\left(\frac{\log_2 M }{n'} \frac{n'}{2n'+\beta-1}\right)=\frac{n'}{2n'+\beta-1}\cdot R_{WWL}$.

The encoder $\cE$ and decoder $\cD$ come directly from $\cE_{WWL}$ and $\cD_{WWL}$, which have complexity $O(n)$ both in time and in space. Therefore, $\cE$ and $\cD$ both have linear complexity in time and in space.
\end{proof}

\begin{corollary}\label{cor:lower 1 beta p}
Let $\beta,p$ be two positive integers such that $p\leq \beta$, then
$$C(1,\beta,p) \geq \max \left\{\frac{C_{WWL}(\beta,p)}{2},\frac{p}{\beta} \right\}.$$
\end{corollary}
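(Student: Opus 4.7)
The corollary is an immediate consequence of the two constructions already presented in the section, so the plan is to extract the two lower bounds in parallel, with essentially no new technical machinery required.

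For the bound $C(1,\beta,p) \geq p/\beta$, I would specialize Construction~\ref{const:trivial} to $\alpha=1$. Since $p \leq \beta$, we get $q = \lceil p/\beta \rceil = 1$ and (using the convention $0 \to \beta$) $r = p$. Applying Theorem~\ref{thm:trivial}, the resulting code is $(1,\beta,p)$-constrained with rate $p/(1\cdot\beta) = p/\beta$. Since the construction produces a valid code for every length $n$ that is a multiple of $\beta$, the supremum $C_n(1,\beta,p)$ is at least $p/\beta$ for all sufficiently large $n$, hence $C(1,\beta,p) \geq p/\beta$.

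For the bound $C(1,\beta,p) \geq C_{WWL}(\beta,p)/2$, I would invoke Construction~\ref{const:1 beta p} together with Theorem~\ref{thm:const-1-beta-p}. By definition of $C_{WWL}(\beta,p)$ and the existence of capacity-approaching $(\beta,p)$-WWL codes (as discussed before Construction~\ref{const:1 beta p} and elaborated in Appendix~\ref{app:A}), for every $\varepsilon > 0$ there exists, for all sufficiently large $n'$, a $(\beta,p)$-WWL code $\cC_{WWL}$ of length $n'$ with rate $R_{WWL} \geq C_{WWL}(\beta,p) - \varepsilon$. Feeding this into Construction~\ref{const:1 beta p} yields a $(1,\beta,p)$-constrained code of length $n = 2n' + \beta - 1$ with rate
\[
\frac{n'}{2n'+\beta-1} R_{WWL} \;\geq\; \frac{n'}{2n'+\beta-1}\bigl(C_{WWL}(\beta,p) - \varepsilon\bigr).
\]
Letting $n' \to \infty$ makes the prefactor tend to $1/2$, so $C(1,\beta,p) \geq \tfrac{1}{2}(C_{WWL}(\beta,p) - \varepsilon)$. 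Since $\varepsilon$ is arbitrary, $C(1,\beta,p) \geq C_{WWL}(\beta,p)/2$.

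Taking the maximum of the two bounds gives the stated inequality. There is no real obstacle here: the only point to be slightly careful about is the limit argument in the second bound, specifically that the two length parameters $n'$ and $n = 2n' + \beta - 1$ both go to infinity together, so the supremum in the definition of $C(1,\beta,p) = \lim_{n\to\infty} C_n(1,\beta,p)$ indeed captures the rate $\frac{n'}{2n'+\beta-1} R_{WWL}$ in the limit. Since the assumption $p \leq \beta$ is needed only to guarantee that Construction~\ref{const:1 beta p} (and the trivial one with $\alpha=1$) is well-defined, both pieces of the maximum are obtained unconditionally within the hypothesis.
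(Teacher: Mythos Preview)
Your proposal is correct and matches the paper's intended approach: the corollary is stated without proof precisely because it follows immediately from Theorem~\ref{thm:trivial} (specialized to $\alpha=1$) and Theorem~\ref{thm:const-1-beta-p} (letting $n'\to\infty$ with capacity-approaching WWL codes), exactly as you spell out. Your handling of the limit in the second bound and the observation that $p\le\beta$ ensures both constructions are well-defined are the only details worth making explicit, and you have done so.
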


Corollary~\ref{cor:lower 1 beta p} provides a lower bound that is achieved by practical coding schemes. In fact, following similar proofs in~\cite{Berger98,Cohen83,CZ88}, we can prove the following theorem using probabilistic combinatorial tools~\cite{AS92}.
\begin{theorem} \label{thm:cap 1 beta p}
Let $\beta,p$ be positive integers such that $\beta\geq p$, then
$$C(1,\beta,p) = C_{WWL}(\beta,p),$$
where $C_{WWL}(\beta,p)$ is the capacity of the $(\beta,p)$-WWL constraint.
\end{theorem}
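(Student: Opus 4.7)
The plan is to combine the upper bound that already follows from Theorem~\ref{th:upper} with a matching lower bound proved via the probabilistic method. Theorem~\ref{th:upper} gives $C(1,\beta,p)\leq C_{2D}(1,\beta,p)$, and the two-dimensional $(1,\beta,p)$-array constraint is literally the one-dimensional $(\beta,p)$-WWL constraint (the ``$1\times\beta$ window'' is a single window of $\beta$ consecutive cells), so $C_{2D}(1,\beta,p)=C_{WWL}(\beta,p)$ and $C(1,\beta,p)\leq C_{WWL}(\beta,p)$ is immediate. The nontrivial half is the matching lower bound: Construction~\ref{const:1 beta p} only delivers rate $\tfrac{1}{2}C_{WWL}(\beta,p)$ because it pays an additive overhead of $n'+\beta-1$ cells to transmit the previous state to the decoder, so a sharper argument is required.

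My plan for the lower bound is to reformulate the code design as a covering question about the decoder and then to solve it probabilistically. Because the $(1,\beta,p)$-constraint couples only two consecutive writes, and because the encoder is allowed to know the current state $\bfu$ while the decoder sees only the new state $\bfv$, I would first observe that a pair $(\cE,\cD)$ is a valid $(1,\beta,p)$-code with message-set size $M$ if and only if the decoder $\cD:\{0,1\}^n\to[1:M]$ satisfies the coverage property: for every $\bfu\in\{0,1\}^n$ and every $m\in[1:M]$, some $\bfv\in\bfu+\cS_n(\beta,p)$ has $\cD(\bfv)=m$. Given such a $\cD$, one simply defines $\cE(\bfu,m)$ to be any such $\bfv$, and the constraint $\bfv-\bfu\in\cS_n(\beta,p)$ is satisfied by construction.

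To establish that a decoder with the coverage property exists, I would fix any $R<C_{WWL}(\beta,p)$, set $M=\lfloor 2^{nR}\rfloor$, and sample $\cD$ uniformly at random by assigning iid uniform labels $D_\bfv\in[1:M]$ to each $\bfv\in\{0,1\}^n$. For fixed $(\bfu,m)$, the probability that none of the $|\cS_n(\beta,p)|$ candidates $\bfv\in\bfu+\cS_n(\beta,p)$ receives the label $m$ is $(1-1/M)^{|\cS_n(\beta,p)|}$. A union bound over the $2^n M$ pairs $(\bfu,m)$ yields
\[
\Pr[\cD\text{ fails}]\;\leq\;2^n M\bigl(1-1/M\bigr)^{|\cS_n(\beta,p)|}\;\leq\;2^n M\exp\bigl(-|\cS_n(\beta,p)|/M\bigr).
\]
By Theorem~\ref{thm:cap-wwl}, $|\cS_n(\beta,p)|=2^{n(C_{WWL}(\beta,p)+o(1))}$, so $|\cS_n(\beta,p)|/M$ grows like $2^{n(C_{WWL}(\beta,p)-R-o(1))}$, which is singly exponential in $n$ and therefore dwarfs the prefactor $2^n M=2^{n(1+R)}$. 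Hence the failure probability tends to $0$, a valid $\cD$ exists for all sufficiently large $n$, and the rate $R$ is $(1,\beta,p)$-achievable. Letting $R\uparrow C_{WWL}(\beta,p)$ completes the proof.

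The main obstacle I expect is justifying the decoder-coverage reformulation against the formal model of Section~\ref{sec:preliminaries}, in particular arguing that using the same $(\cE,\cD)$ on every write is legitimate under the rate definition~\eqref{eq:rate2} (the paper's discussion following that definition, on periodic schedules and time-varying encoders, should make this routine). The residual technical step is to make $|\cS_n(\beta,p)|\geq 2^{n(C_{WWL}(\beta,p)-\varepsilon)}$ precise enough that $|\cS_n(\beta,p)|/M$ exceeds $n\ln 2+\ln M$, which the definition of $C_{WWL}(\beta,p)$ delivers as soon as one picks $R$ strictly below $C_{WWL}(\beta,p)$.
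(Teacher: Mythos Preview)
Your argument is correct. Both you and the paper obtain the upper bound from Theorem~\ref{th:upper} and establish the matching lower bound by recasting the problem as a covering question---requiring that each decoder preimage $\cD^{-1}(m)$ translate $\cS_n(\beta,p)$ onto all of $\{0,1\}^n$---and then invoking the probabilistic method; your coverage condition is precisely the paper's notion of an $\cS$-good set (Definition~\ref{defn:good}, Lemma~\ref{lem:yielding constrained codes}). The divergence is in the random ensemble. The paper (Appendix~\ref{app:B}) builds a random \emph{linear} code $B_j$ one dimension at a time, proving a doubling estimate $Q_{B_{j+1}}\leq Q_{B_j}^2$ (Lemma~\ref{lem:Q}) and then taking the $2^{n-j}$ cosets of an $\cS$-good $B_j$ as the message classes; this yields the same rate but endows the code with algebraic structure that could be exploited for syndrome-style encoding and decoding. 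Your approach assigns i.i.d.\ uniform labels and applies a direct union bound, which is shorter and more elementary but produces an unstructured code. Both routes are standard in the defect-memory and covering-code literature the paper cites, so neither is preferable on mathematical grounds; the linear-code version simply buys implementability at the cost of a slightly longer existence proof.
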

\begin{proof}
See Appendix~\ref{app:B}.
\end{proof}

\subsection{Time Constraint Improvement}\noindent
Jiang et al. constructed in~\cite{JBL10} an $(\alpha,1,1)$-constrained code. Let us explain their construction as it serves as the basis for our construction. Their construction uses Write-Once Memory (WOM)-codes~\cite{RS82}. A WOM is a storage device consisting of cells that can be used to store any of $q$ values. In the binary case, each cell can be irreversibly changed from state 0 to state 1. We denote by $[n,t;2^{nR_1},\ldots,2^{nR_t}]$ a $t$-write WOM code $\cC_W$ such that the number of messages that can be written to the memory on its $i$-th write is $2^{nR_i}$, and the sum-rate of the WOM code is defined to be $R_{\textrm{sum}}=\sum_{i=1}^tR_i$. The sum-capacity $C_{\textrm{sum}}$ is defined as the supremum of achievable sum-rates. The code is specified by $t$ pairs of encoding and decoding maps, $(\cE_i,\cD_i)$, where $i\in\{1,2,\ldots,t\}$. Assuming that the cell-state vector before the $i$-th write is $\bfc_{i}$, the encoder is a map
\[
\cE_i: [1:2^{nR_i}]\times \{0,1\}^n \rightarrow \{0,1\}^n,
\]
such that for all $(m,\bfc_{i-1,1}^n) \in[1:2^{nR_i}]\times \{0,1\}^n$,
\[
\bfc_{i-1,1}^n \preceq \bfc_{i,1}^n = \cE_i(m,\bfc_{i-1,1}^n),
\]
where the relation ``$\preceq$'' is defined in Definition~\ref{def:order}.
The decoder
\[
\cD_i: \{0,1\}^n \rightarrow [1:2^{nR_i}],
\]
satisfies
\[
\cD_i(\cE_i(m,\bfc_{i-1,1}^n)) = m.
\]
for all $m\in[1:2^{nR_i}]$,

It has been shown in~\cite{H85} that the sum-capacity of a $t$-write WOM is $C_{\textrm{sum}}=\log_2(t+1)$.

The constructed $(\alpha,1,1)$-constrained code has a period of $2(t+\alpha)$ writes. On the first $t$ writes of each period, the encoder simply writes the information using the encoding maps of the $t$-write WOM code. Then, on the $(t+1)$-st write no information is written but all the cells are increased to level one. On the following $\alpha-1$ writes no information is written and the cells do not change their levels; that completes half of the period. On the next $t$ writes the same WOM code is again used; however since now all the cells are in level one, the complement of the cell-state vector is written to the memory on each write. On the next write no information is written and the cells are reduced to level zero. In the last $\alpha-1$ writes no information is written and the cells do not change their values. We present this construction now in detail.

\begin{const}\label{const:alpha 1 1}
Let $\alpha$ be a positive integer and let $\cC_W$ be an $[n,t;2^{nR_1},\ldots,2^{nR_t}]$ $t$-write WOM code. Let $\cE_i(m,\bfv_{i-1})$ be the $i$-th encoder of $\cC_W$, for $m\in[1:2^{nR_i}], i\in[1:t]$. An $(\alpha,1,1)$-constrained code $\cC_{\alpha,1,1}$ is constructed as follows. For all $i\geq 1$, let $i'=i(\bmod (2(t+\alpha)))$, where $1\leq i'\leq 2(t+\alpha)$. The cell-state vector after the $i$-th write is denoted by $\bfc_i$. On the $i$-th write, the encoder uses the following rules:
\begin{itemize}
\item If $i'\in[1:t]$, write $M_{i'}\in[1:2^{nR_i'}]$ such that
\[
\bfc_i = \cE_{i'}(M_{i'},\bfc_{i-1}).
\]
\item If $i' = t+1$, no information is written and the cell-state vector is changed to the all-one vector \,$\boldsymbol{1}$, i.e., $\bfc_i=\boldsymbol{1}$.
\item If $i'\in[t+2:t+\alpha]$, no information is written and the cell-state vector is not changed.
\item If $i'\in[t+\alpha+1:2t+\alpha]$, write $M_{i'-t-\alpha}\in[1:2^{nR_{i'-t-\alpha}}]$ such that
\[
\bfc_i = \overline{\cE_{i'-t-\alpha}(M_{i'-t-\alpha},\overline{\bfc}_{i-1})}.
\]
\item If $ i' = 2t+\alpha+1$, no information is written and the cell-state vector is changed to the all-zero vector \,$\boldsymbol{0}$, i.e., $\bfc_i=\boldsymbol{0}$.
\item If $ i'\in[2t+\alpha+1:2(t+\alpha)]$, no information is written and the cell-state vector is not changed.
\end{itemize}
\end{const}

\begin{remark}
This construction is presented differently in~\cite{JBL10}. This results from the constraint of having the same rate on each write which we can bypass in this work. Consequently, in our case we can have varying rates and thus the code $\cC_{\alpha,1,p}$ can achieve a higher rate.
\end{remark}

\begin{theorem} \label{thm:const-alpha-1-1}
The code $\cC_{\alpha,1,1}$ is an $(\alpha,1,1)$-constrained code. If the $t$-write WOM code $\cC_W$ is sum-rate optimal, then the rate of $\cC_{\alpha,1,1}$ is $\frac{\log_2(t+1)}{t+\alpha}$.
\end{theorem}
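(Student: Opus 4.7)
The plan is to prove the two claims in sequence: first that the construction satisfies the $(\alpha,1,1)$-constraint, and then compute the asymptotic rate. Throughout, I would work within one period of length $2(t+\alpha)$ and observe that the period repeats identically (modulo swapping the roles of $\zero$ and $\one$), so it suffices to analyze one cell across one period plus a small look-ahead/look-back.

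For the constraint verification, the key structural observation is that the per-cell write activity is heavily clustered. During writes $i' \in [1{:}t+1]$, the cell state is monotone non-decreasing and binary (the first $t$ writes respect WOM monotonicity, and write $t+1$ forces the value to $1$); hence the cell changes at most once in this entire window. Similarly, during writes $i' \in [t+\alpha+1 : 2t+\alpha+1]$, applying the same argument to $\overline{\bfc}_{i-1}$ shows that $\bfc$ transitions from $1$ to $0$ at most once. In the buffer intervals $[t+2 : t+\alpha]$ and $[2t+\alpha+2 : 2(t+\alpha)]$, no cell changes at all. Thus the only candidate change-times of a given cell within one period are one write in $[1 : t+1]$ and one write in $[t+\alpha+1 : 2t+\alpha+1]$, and the minimum gap between these two is $(t+\alpha+1)-(t+1)=\alpha$. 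Across period boundaries the same computation gives gap $2(t+\alpha)+1-(2t+\alpha+1)=\alpha$. Consequently, in any $\alpha$ consecutive writes (which span indices $i,i{+}1,\ldots,i{+}\alpha{-}1$, a window of length $\alpha-1$), any single cell is rewritten at most once, so the total rewrite cost on that cell is at most $1$, which is exactly the $(\alpha,1,1)$-constraint.

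For the rate, I would count the information written per period. Writes $i' \in [1{:}t]$ encode $\sum_{i=1}^{t} nR_i = n R_{\mathrm{sum}}$ bits using $\cC_W$. Writes $i' \in [t+\alpha+1 : 2t+\alpha]$ encode another $nR_{\mathrm{sum}}$ bits via the complemented WOM encoders (decoding is well-defined because the decoder knows the write index $i'$, knows the anchor values $\bfc=\one$ or $\bfc=\zero$ set at writes $t+1$ and $2t+\alpha+1$, and can apply $\cD_{i'-t-\alpha}$ to $\overline{\bfc}_i$). The remaining writes carry no information. Hence the total bits per period is $2nR_{\mathrm{sum}}$, spread over $2(t+\alpha)$ writes, so by the rate definition in~\eqref{eq:rate2},
\[
R \;=\; \frac{2nR_{\mathrm{sum}}}{2(t+\alpha)\,n} \;=\; \frac{R_{\mathrm{sum}}}{t+\alpha}.
\]
When $\cC_W$ is sum-rate optimal, $R_{\mathrm{sum}} \to C_{\mathrm{sum}} = \log_2(t+1)$ as $n\to\infty$ by the result of~\cite{H85}, giving the claimed rate $\frac{\log_2(t+1)}{t+\alpha}$.

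The main obstacle I anticipate is a careful accounting of the boundary cases, namely verifying that a cell changed at the very end of one half-period (say write $t+1$) and then at the very beginning of the next activity window (write $t+\alpha+1$) are indeed separated by $\alpha$ in the sense required by the constraint definition, and analogously across the period wrap-around. This is essentially the index bookkeeping done above, but it is the one place where an off-by-one error would invalidate the construction.
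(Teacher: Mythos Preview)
Your proposal is correct and follows essentially the same approach as the paper's proof: both argue that within each period a cell changes at most once in the first $t{+}1$ writes and at most once in the corresponding complemented block, with the $\alpha{-}1$ idle writes ensuring these changes are at least $\alpha$ apart, and both compute the rate as $\frac{R_{\mathrm{sum}}}{t+\alpha}$. Your version is simply more explicit about the monotonicity argument and the gap arithmetic at block and period boundaries, whereas the paper states these facts tersely; the only cosmetic wrinkle is the phrase ``window of length $\alpha{-}1$'' (it is a window of $\alpha$ writes), but your gap computation is correct regardless.
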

\begin{proof}
In every period of $2(t+\alpha)$ writes, every cell is programmed at most twice; once in the first $t+1$ writes and once in the first $t+1$ writes of the second part of the write-period. After every sequence $t+1$ writes, the cell is not programmed for $\alpha-1$ writes. Therefore the rewrite cost of every cell among $\alpha$ consecutive rewrites is at most $1$.

If the rate of the WOM code $\cC_W$ is $R_W$ then $2nR_W$ bits are written in every period of $2(t+\alpha)$ writes. Hence, the rate of $\cC_{\alpha,1,1}$ is $\frac{2nR_{\small W}}{2(t+\alpha)n} = \frac{R_{\small W}}{t+\alpha}$. If $\cC_W$ is sum-rate optimal, the rate of $\cC_{\alpha,1,1}$ is therefore $\frac{\log_2(t+1)}{t+\alpha}$.
\end{proof}

The next table shows the highest rates of $(\alpha,1,1)$-constrained codes based on Construction \ref{const:alpha 1 1} for $\alpha=4,\ldots,8$.

\begin{center}
\begin{tabular}{|c||c|c|c|c|c|}
  \hline
  $\alpha$ & 4 & 5 & 6 & 7 & 8 \\ \hline
  1/$\alpha$ & 0.25 & 0.2 & 0.167 & 0.143 & 0.125 \\ \hline
  rate of $\cC_{\alpha,1,1}$ & 0.290 & 0.256 & 0.235 & 0.216& 0.201 \\
  \hline
\end{tabular}
\end{center}

Next, we would like to extend Construction \ref{const:alpha 1 1} in order to construct $(\alpha,1,p)$-constrained codes for all $p\geq 2$. For simplicity of the construction, we will assume that $p$ is an even integer; and the required modification for odd values of $p$ will be immediately clear. We choose $t\geq 1$ such that $\alpha \geq (p-1)t$ and the period of the code is $\alpha+t$. On the first $t$ writes of each period, the encoder uses the encoding map of the $t$-write WOM code. In the following $t$ writes, it uses the bit-wise complement of a WOM code as in Construction~\ref{const:alpha 1 1}. This procedure is repeated for $\frac{p}{2}$ times; this completes the first $tp$ writes in the period. On the $(tp+1)$-st write, no new information is written and the cell-state vector is changed to the all-zero vector. During the $(tp+2)$-nd to $(\alpha+t)$-th writes, no information is written and the cell-state vector is not changed. That completes one period of $\alpha+t$ writes.

\begin{const}\label{const:alpha_1_p_1}
Let $\alpha,p,t$ be positive integers such that $\alpha\geq (p-1)t$. Let $\cC_W$ be an $[n,t;2^{nR_1},\ldots,2^{nR_t}]$ $t$-write WOM code. For $i\in[1:t]$, let $\cE_i(m,\bfv_{i-1})$ be its encoding map on the $i$-th write, where $m\in[1:2^{nR_i}]$. An $(\alpha,1,p)$-constrained code $\cC_{\alpha,1,p}$ is constructed as follows. For all $i\geq 1$, let $i'=i(\bmod (\alpha+t))$, $i''=i'(\bmod 2t)$ where $1\leq i'\leq (\alpha+t), 1\leq i''\leq 2t$. The cell-state vector after the $i$-th write is denoted by $\bfc_i$. On the $i$-th write, the encoder uses the following rules:
\begin{itemize}
\item If $i'\in[1:pt]$ and $i''\in[1:t]$, write $M_{i''}\in[1:2^{nR_{i''}}]$ such that
\[
 \bfc_i = \cE_{i''}(M_{i''},\bfc_{i-1}).
\]
\item If $i'\in[1:pt]$ and $i''\in[t+1:2t]$, write $M_{i''-t}\in[1:2^{nR_{i''-t}}]$ such that
\[
\bfc_i = \overline{\cE_{i''-t}(M_{i''-t},\overline{\bfc}_{i-1})}.
\]
\item If $i'=pt+1$, no information is written and the cell-state vector is changed to $\boldsymbol{0}$, i.e., $\bfc_i=\boldsymbol{0}$.
\item If $i'\in[pt+2:\alpha+t]$, no information is written and the cell-state vector is not changed.
\end{itemize}
\end{const}

\begin{theorem} \label{thm:const-alpha-1-p}
The code $\cC_{\alpha,1,p}$ is an $(\alpha,1,p)$-constrained code. If the $t$-write WOM code $\cC_W$ is sum-rate optimal, then the rate of $\cC_{\alpha,1,p}$ is $\frac{p\log_2(t+1)}{\alpha+t}$.
\end{theorem}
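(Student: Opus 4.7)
The plan is to prove the two assertions separately: first the constraint, then the rate. For the constraint, I would fix an arbitrary cell and an arbitrary window of $\alpha$ consecutive writes and bound the number of times the cell is programmed in that window by $p$. The basic observations are: in each forward cycle (of length $t$) the cell changes at most once because the WOM encoder is monotonic ($0\to 1$); in each backward cycle the cell also changes at most once since the complement-WOM is monotonic in the reverse direction ($1\to 0$); the reset write programs the cell iff its current value is $1$; and idle writes never program the cell. Thus each period of $\alpha+t$ writes contains at most $p+1$ programming ``slots'' per cell (one per cycle, plus the reset).

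The first sharpening is a per-period bound. The cycles alternate F,B,F,B,$\ldots$,F,B and $p$ is even, so if the cell were programmed in all $p$ cycles its state sequence across consecutive cycle boundaries would be $0,1,0,1,\ldots,1,0$, i.e., it ends at $0$. Consequently the reset would not program it, and the number of programmings in a single period is at most $p$. A window contained in one period therefore trivially satisfies the bound.

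Next I would handle the main case where the window spans two consecutive periods (this is the only other possibility since $\alpha<\alpha+t$). Let the window start at position $j$ of period~$k$; I would compute how many slots it intersects in each period. A short case analysis on $j$ shows that the window meets at most $p+1$ slots in total, and equality is achieved only when the window covers the last backward cycle (cycle $p$) of period~$k$ together with the reset of period~$k$. The key parity observation, which I expect to be the main obstacle to state cleanly, is that cycle $p$ and the reset are both of B-type: a transition in cycle $p$ leaves the cell at $0$, so the reset cannot then program it; conversely, if the reset programs, the cell was $1$ entering the reset, so cycle $p$ did not transition. Hence at most one of the two yields a programming, which brings the total in the window down from $p+1$ to $p$. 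Combined with the trivial in-period case, this establishes the $(\alpha,1,p)$-constraint.

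For the rate, only the $pt$ cycle writes in each period of $\alpha+t$ writes carry information: each block of $t$ consecutive cycle writes executes the full $t$-write WOM code $\cC_W$, conveying $n R_{\mathrm{sum}}$ bits per block. When $\cC_W$ is sum-rate optimal, $R_{\mathrm{sum}}=\log_2(t+1)$, so the total bits written per period per cell is $p\log_2(t+1)/(\alpha+t)$, which is exactly the claimed rate by the definition in~\eqref{eq:rate2}.
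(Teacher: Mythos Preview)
Your proposal is correct and follows essentially the same approach as the paper: both rely on the monotonicity of the WOM encoder to bound each cycle's contribution to one programming, then use the ``B,B'' parity conflict between the last (backward) cycle and the reset to reduce the worst-case slot count from $p+1$ to $p$. Your treatment is in fact more complete than the paper's, which explicitly calls itself a sketch and only verifies a single representative $\alpha$-window inside one period; your explicit handling of windows that straddle two periods (together with the observation that $a+b=p+1$ forces $a\ge 2$, hence both cycle $p$ and the reset lie in the window) fills exactly the gap the paper leaves implicit.
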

\begin{proof}
This is similar to the proof of Theorem~\ref{thm:const-alpha-1-1}, so we present here only a sketch of the proof. In every period of $(\alpha+t)$ writes, each cell is rewritten at most $p$ times. In particular, the first rewrite happens before the $(t+1)$-st write. After that, the cell is rewritten at most $p-1$ times until the $(tp+1)$-st write and then not programmed for $\alpha+t-(tp+1)$ writes. Therefore, each cell is rewritten at most $p$ times on $\alpha+t-(tp+1)+(tp+1)-t =\alpha$ writes. This proves the validity of the code.

If the rate of the WOM code $\cC_W$ is $R_W$ then $pn R_W$ bits are written during each period of $\alpha+t$ writes since the WOM code is used $p$ times. Hence, the rate of $\cC_{\alpha,1,p}$ is $\frac{2pn R_W}{2(\alpha+t)n}=\frac{pR_W}{\alpha+t}$. If that $\cC_W$ is sum-rate optimal, the rate of $\cC_{\alpha,1,p}$ is $\frac{p\log_2(t+1)}{\alpha+t}$.
\end{proof}

\begin{remark} \label{rmk:lower alpha 1 p}
In Construction~\ref{const:alpha_1_p_1} we required that $\alpha\geq (p-1)t$ and, in particular, $t\leq \left\lfloor\frac{\alpha}{p-1}\right\rfloor$. If $t>\left\lfloor\frac{\alpha}{p-1}\right\rfloor$, we can simply use Construction~\ref{const:alpha_1_p_1} while taking $\alpha = (p-1)t$, i.e., the period of writes is now $pt$ and and we construct a $((p-1)t,1,p)$-constrained code, which is also an $(\alpha,1,p)$-constrained code. The rate of the code is $R_W/t$, where $R_W$ is the rate of the WOM code $\cC_W$.
\end{remark}

The next corollary provides lower bounds on $C(\alpha,1,p)$.

\begin{corollary} \label{thm:lower alpha-1-p}
Let $\alpha,p$ be positive integer such that $p \leq \alpha$. Then,
\[
C(\alpha,1,p)\geq \max_{t, t^*\in \mathbb{Z}_+,} \left\{\frac{p\log_2(t+1)}{\alpha+t}, \frac{\log_2(t^*+1)}{t^*},\frac{p}{\alpha}\right\},
\]
where
\[
1\leq t \leq \left\lfloor\frac{\alpha}{p-1}\right\rfloor, t^*=\left\lceil\frac{\alpha}{p-1}\right\rceil.
\]

\end{corollary}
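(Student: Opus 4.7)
The plan is to assemble three distinct lower bounds, one for each term in the maximum, and then take their maximum.

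For the first term, I would invoke Theorem~\ref{thm:const-alpha-1-p} directly. For any positive integer $t$ with $1 \leq t \leq \lfloor \alpha/(p-1) \rfloor$, the inequality $\alpha \geq (p-1)t$ holds, so Construction~\ref{const:alpha_1_p_1} applies. Plugging in a sequence of $t$-write WOM codes whose sum-rates approach the sum-capacity $\log_2(t+1)$ (whose existence is guaranteed by Heegard's result), Theorem~\ref{thm:const-alpha-1-p} yields $(\alpha,1,p)$-constrained codes of rate arbitrarily close to $\frac{p \log_2(t+1)}{\alpha+t}$. Since this holds for every admissible $t$, taking the maximum over $t$ in that range gives the first lower bound.

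For the second term, I would appeal to Remark~\ref{rmk:lower alpha 1 p}. When $t^* = \lceil \alpha/(p-1)\rceil$ exceeds $\lfloor \alpha/(p-1)\rfloor$ (the only case that is not already covered by the first term), the remark tells us that we instead run Construction~\ref{const:alpha_1_p_1} with the effective time constraint $(p-1)t^*$ in place of $\alpha$; the resulting code is a $((p-1)t^*,1,p)$-constrained code, and since $(p-1)t^* \geq \alpha$ it is a fortiori an $(\alpha,1,p)$-constrained code. Substituting this effective value into Theorem~\ref{thm:const-alpha-1-p} produces a code of rate $\frac{p\log_2(t^*+1)}{(p-1)t^*+t^*}=\frac{\log_2(t^*+1)}{t^*}$, giving the second term.

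For the third term, I would simply apply Theorem~\ref{thm:trivial} with $\beta=1$. The trivial construction delivers an $(\alpha,1,p)$-constrained code of rate $\frac{p}{\alpha \cdot 1} = \frac{p}{\alpha}$, which is the third term. Since each of the three constructions produces a valid $(\alpha,1,p)$-constrained code (or a sequence of such codes approaching the stated rate), each of the three quantities is a valid lower bound on $C(\alpha,1,p)$, and hence so is their maximum. There is no real obstacle here: the only subtlety is remembering to use sum-rate optimal WOM codes (which exist only in the asymptotic sense, so the inequality should be understood as a supremum over code lengths) and noting that Remark~\ref{rmk:lower alpha 1 p} lets us handle values of $t^*$ outside the range where Theorem~\ref{thm:const-alpha-1-p} directly applies.
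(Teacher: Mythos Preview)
Your proposal is correct and matches the paper's intended derivation: the corollary is stated without proof precisely because it is an immediate assembly of Theorem~\ref{thm:const-alpha-1-p} (first term), Remark~\ref{rmk:lower alpha 1 p} (second term), and Theorem~\ref{thm:trivial} with $\beta=1$ (third term), exactly as you describe.
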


\begin{figure}[htbp]
    \centering
  \includegraphics[width=0.8\linewidth]{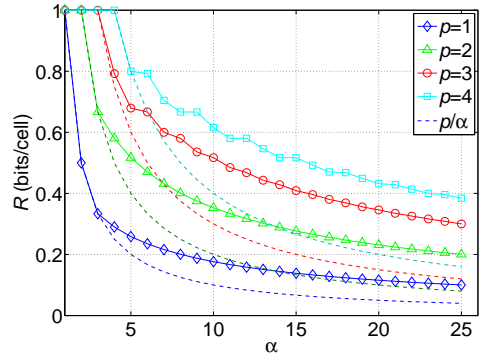}
  \caption{Lower bound on $C(\alpha,1,p)$}
\label{figure:lower case time}
\end{figure}
Figure \ref{figure:lower case time} shows the rates of $(\alpha,\beta=1,p)$ constrained codes obtained by selecting the best $t$ for each pair of $(\alpha,p)$. In comparison to the codes in Construction \ref{const:trivial} whose rates are shown by the dashed lines, our construction approximately doubles the rates. Our lower bounds achieve approximately $78\%$ of the corresponding upper bounds on $C(\alpha,1,p)$.

\subsection{Time-Space Constraint Improvement}\noindent

In this section, we are interested in combining the improvements in time and in space to provide lower bounds on the capacity of $(\alpha,\beta,p)$-constraints.

\begin{theorem}\label{thm:const-alpha-beta-p}
For all $\alpha, \beta,p$ positive integers,
\[
C(\alpha,\beta,p)\geq \max\left\{\frac{C(\alpha,1,p)}{\beta},\frac{C(1,\beta,p)}{\alpha}\right\}.
\]
\end{theorem}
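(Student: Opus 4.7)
The plan is to prove the two bounds separately, each by a simple subsampling/dilution construction, one in space and one in time, converting a code that handles only one of the two constraints into one that handles both.

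For the first bound $C(\alpha,\beta,p) \geq C(\alpha,1,p)/\beta$, I would start from an $(\alpha,1,p)$-constrained code $\cC'$ of length $n'$ with rate $R'$ arbitrarily close to $C(\alpha,1,p)$, and construct an $(\alpha,\beta,p)$-constrained code $\cC$ of length $n=\beta n'$ by placing the cells of $\cC'$ in positions $1,\beta+1,2\beta+1,\ldots,(n'-1)\beta+1$ of $\cC$ and fixing the remaining cells at level $0$ across all writes. On every write, the encoder of $\cC$ simply invokes the encoder of $\cC'$ on the active cells and leaves the rest untouched; the decoder reads off those same positions. The key observation is that any window of $\beta$ contiguous cells of $\cC$ contains exactly one active cell. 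Hence the total rewrite cost in that window over any $\alpha$ consecutive writes equals the rewrite cost at that single cell over $\alpha$ writes, which is at most $p$ by the $(\alpha,1,p)$-constraint on $\cC'$. The rate is $n'R'/(\beta n')=R'/\beta \to C(\alpha,1,p)/\beta$.

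For the second bound $C(\alpha,\beta,p) \geq C(1,\beta,p)/\alpha$, I would take a $(1,\beta,p)$-constrained code $\cC''$ of length $n$ with rate $R''$ arbitrarily close to $C(1,\beta,p)$ and dilute in time: build $\cC$ of length $n$ whose encoder on the $i$-th write uses $\cE''$ when $i \equiv 1 \pmod{\alpha}$ and does nothing (leaving the cell-state vector unchanged) otherwise. The decoder recovers the current message from the stored vector via $\cD''$, using only the writes with $i \equiv 1 \pmod{\alpha}$ for new information. In any window of $\alpha$ consecutive writes exactly one is an actual write, and the $(1,\beta,p)$ property of $\cC''$ guarantees that this single write costs at most $p$ in any $\beta$-cell window; all other writes contribute zero. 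The rate is $R''/\alpha \to C(1,\beta,p)/\alpha$.

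Taking the maximum over the two constructions yields the desired inequality. There is no real obstacle here: both directions are immediate once one notices that a $\beta$-spatial window with only one active cell reduces to the one-cell (time-only) constraint, and that an $\alpha$-write time window containing only one actual write reduces to the single-write (space-only) constraint. The only minor bookkeeping item is handling $n$ not divisible by $\beta$ (which is absorbed in the limit defining $C(\alpha,\beta,p)$) and the standard formalities of embedding the encoder/decoder of the shorter code into the longer one, neither of which affects the asymptotic rate.
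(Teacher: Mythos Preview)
Your proposal is correct and follows essentially the same approach as the paper: spatial dilution by spacing the active cells $\beta$ apart to reduce to the $(\alpha,1,p)$ case, and temporal dilution by writing only once every $\alpha$ steps to reduce to the $(1,\beta,p)$ case. The paper's proof is terser but the constructions and the key observations (a $\beta$-window contains at most one active cell; an $\alpha$-window of writes contains at most one actual write) are identical.
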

\begin{proof}
An $(\alpha,\beta,p)$-constrained code can be constructed in two ways.
\begin{enumerate}
\item
Let $\cC$ be a $(1,\beta,p)$-constrained code of rate $R$ and length $n$. We construct a new code $\cC'$ with the same number of cells. New information is written to the memory on all $i$-th writes, where $i\equiv 1(\bmod \alpha)$, simply by using the $\left\lceil\frac{i}{\alpha}\right\rceil$-th write of the code $\cC$. Then, the code $\cC'$ is an $(\alpha,\beta,p)$-constrained code and its rate is $R/\alpha$. Therefore, we conclude that $C(\alpha,\beta,p)\geq \frac{C(1,\beta,p)}{\alpha}$.
\item
Let $\cC$ be an $(\alpha,1,p)$-constrained code of rate $R$ and length $n$. We construct a new code $\cC'$ for $n\beta$ cells: $(c_1,c_2,\ldots,c_{n\beta})$. The code $\cC'$ uses the same encoding and decoding maps of the code $\cC$, while using only the $n$ cells $c_i$ such that $i\equiv1(\bmod \beta)$. Then, the code $\cC'$ is an $(\alpha,\beta,p)$-constrained code and its rate is $R/\beta$. Therefore, we conclude that $C(\alpha,\beta,p)\geq \frac{C(\alpha,1,p)}{\beta}$.
\end{enumerate}
The capacity must be greater than or equal to the maximum of the two lower bounds.
\end{proof}


\appendices
\section{}\label{app:A}
In this section, we show an enumerative encoding and decoding strategy with linear complexity for the set of $(\beta,p)$-WWL vectors.

\begin{definition}\label{def:order}
Let $\bfX=\{\bfx_1,\ldots,\bfx_N\}$ be a set of distinct binary vectors, $\bfx_i\in\{0,1\}^n, i=1,\ldots,N$. Let $\psi(\bfx)$ denote the decimal representation of a vector $\bfx\in\{0,1\}^n$. For $\bfx,\bfy\in\{0,1\}^n$, we say $\bfx\preceq\bfy$ (or $\bfx\prec\bfy$) if and only if $\psi(\bfx)\leq\psi(\bfy)$ (or $\psi(\bfx)<\psi(\bfy)$). The {\em order} of the element $\bfx_i$ in $\bfX$ is defined as:
\[
ord(\bfx_i) = \big| \{ j: \bfx_j \preceq \bfx_i, 1\leq j\leq N\} \big|.
\]
\end{definition}

Let $\{\bfc_1,\ldots,\bfc_{M_n}\}$ be an ordering of the elements in $\cS_n(\beta,p)$, where $M_n=|\cS_n(\beta,p)|$. The encoder and decoder of a $(\beta,p)$-WWL code give a one-to-one mapping between $\cS_n(\beta,p)$ and $\{1,\ldots,M_n\}$, namely $\cE_{WWL}(m)=\bfc_m$ where $o(\bfc_m)=m$ and $\cD_{WWL}(\bfc_m)=o(\bfc_m)=m$, for all $m = \{1,\ldots,M_n\}$. Now the problem is to calculate $o(\bfc_m)$ given $\bfc_m$. Let $\bfs_1,\ldots,\bfs_{M_{\beta-1}}$ be the ordering of the vectors in $S_{\beta,p}$ introduced in Definition~\ref{defn:transition matrix}, where $M_{\beta-1}=|S_{\beta,p}|=|\cS_{\beta-1}(\beta,p)|=\sum_{i=0}^{p}\binom{\beta-1}{i}$. Let
$$\bfx_{\beta,p,n}=(x_1(n),x_2(n),\ldots,x_{M_{\beta-1}}(n))^T,$$
where $x_i(n)$ is the number of $(\beta,p)$-WWL vectors of length $n$ that have the vector $\bfs_i$ as a prefix, where $\bfx^T$ denotes the transpose of $\bfx$.
\begin{lemma} \label{lem:recursion}
The vectors $\bfx_{\beta,p,n+1}$, $n\geq\beta$, satisfy the first-order recursion:
\[ \bfx_{\beta,p,n+1} = A_{\beta,p}\cdot \bfx_{\beta,p,n}. \]
\end{lemma}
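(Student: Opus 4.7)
The plan is to prove the identity componentwise by establishing a bijective decomposition that realizes $x_i(n+1)=\sum_j a_{i,j}\,x_j(n)$. The guiding observation is that the merge operation $f_{\beta-1}$ is designed precisely to describe how two consecutive length-$(\beta-1)$ windows in a binary string must overlap, and the weight condition built into $a_{i,j}$ is designed precisely to control the single length-$\beta$ window that those two windows jointly span.

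Fix $\bfs_i\in S_{\beta,p}$. I would start from an arbitrary $(\beta,p)$-WWL vector $\bfc=(c_1,\ldots,c_{n+1})$ whose first $\beta-1$ entries are $\bfs_i$, and send it to the pair $(j,(c_2,\ldots,c_{n+1}))$, where $j$ is the unique index with $\bfs_j=(c_2,\ldots,c_\beta)$. The checks to make are: (i)~the last $\beta-2$ entries of $\bfs_i$ agree with the first $\beta-2$ entries of $\bfs_j$, so $f_{\beta-1}(\bfs_i,\bfs_j)\neq\textbf{F}$ and in fact equals the first window $(c_1,\ldots,c_\beta)$; (ii)~the $(\beta,p)$-WWL property of $\bfc$ applied to that window gives $wt(f_{\beta-1}(\bfs_i,\bfs_j))\le p$, so $a_{i,j}=1$ and in particular $\bfs_j\in S_{\beta,p}$. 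For the inverse direction, given $j$ with $a_{i,j}=1$ and a $(\beta,p)$-WWL vector $\bfw$ of length $n$ with prefix $\bfs_j$, prepending the single bit $s_{i,1}$ produces a length-$(n+1)$ vector whose prefix is $\bfs_i$, whose first length-$\beta$ window equals $f_{\beta-1}(\bfs_i,\bfs_j)$ and hence has weight at most $p$, and whose remaining length-$\beta$ windows all lie inside $\bfw$ and therefore satisfy the constraint automatically.

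With the bijection in hand, partitioning the set counted by $x_i(n+1)$ according to $j$ and summing gives $x_i(n+1)=\sum_j a_{i,j}\,x_j(n)$, which is the $i$-th entry of $A_{\beta,p}\bfx_{\beta,p,n}$; assembling the rows yields the claimed matrix identity. I do not anticipate a real obstacle; the only delicate bookkeeping point is the hypothesis $n\ge\beta$, which I would invoke to guarantee that the suffix of length $n$ is long enough for its $(\beta,p)$-WWL property to coincide with the restriction of the original constraint to positions $2,\ldots,n+1$, so that the bijection behaves correctly at the boundary.
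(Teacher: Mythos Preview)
Your argument is correct: the bijection between length-$(n{+}1)$ $(\beta,p)$-WWL vectors with prefix $\bfs_i$ and pairs $(j,\bfw)$ with $a_{i,j}=1$ and $\bfw$ a length-$n$ $(\beta,p)$-WWL vector with prefix $\bfs_j$ is exactly the standard transfer-matrix decomposition, and you have verified both directions carefully. The paper itself does not give a proof at all---it simply cites~\cite{WB98}---so there is nothing to compare against beyond noting that your direct bijective argument is precisely the classical one that such a citation is pointing to. One minor remark: the hypothesis $n\ge\beta$ is slightly stronger than necessary (your bijection already works once $n\ge\beta-1$, since that is what is needed for the prefix $\bfs_j$ of length $\beta-1$ to make sense and for every window of the length-$(n{+}1)$ vector past the first to sit inside $\bfw$), but this does not affect the validity of what you wrote.
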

\begin{proof}
See~\cite{WB98}.
\end{proof}

The encoder and decoder have access to a matrix $\BFX_{\beta,p,n} \in \bfZ_+^{(n+\beta)\times M_{\beta-1}}$, where the $i$-th row of $\BFX_{\beta,p,n}$ is $\bfx_{\beta,p,i}^T$, $i=1,\ldots,n+\beta$. For simplicity, $\BFX_{\beta,p,n}$ is written as $\BFX$ if no confusion can occur. We denote by $\BFX(i,j)$ the entry in the $i$-th row and $j$-th column of $\BFX$ and we define $\BFX(i,:)$, $\BFX(:,j)$

\noindent  to be the $i$-th row vector, $j$-th column vector of $\BFX$, respectively, i.e., $\BFX(i,:)=(\BFX(i,1),\ldots,\BFX(i,M_{\beta-1}))$ and $\BFX(:,j) = (\BFX(1,j),\ldots,\BFX(n+\beta,j))^T$. From Lemma~\ref{lem:recursion}, $\BFX_{\beta,p,n}$ can be calculated efficiently with time complexity $O(n)$.


%
%
%
%
%
%
%

\subsubsection{Decoder}
Based on $\BFX_{\beta,p,n}$, we present an enumerative method to calculate the order of each element in $\cS_n(\beta,p)$. Note that the order of a vector is the decoded message corresponding to that vector. In this algorithm, the decoder scans the vector from left to right. Whenever the decoder finds a 1 in the vector, the order of the vector will increase. The details of the algorithm are presented below. Here $\bfc=(c_1,\ldots,c_n)\in\cS_n(\beta,p)$ is the binary vector to be decoded; the algorithm calculates $o(\bfc)\in\{1,\ldots,M_n\}$.
\begin{algr}  \label{algr:decode}

\textsc{Decoding: Calculate} $o(\bfc)$, $\bfc\in\cS_n(\beta,p)$

{\em1:}\noindent\hspace*{\paragraphindent}
 let $cnt = 0, j = 1, i = 0$;\vspace{1mm}

{\em2:}\noindent\hspace*{\paragraphindent}
 while $(i\leq n)$\{\vspace{1mm}

{\em3:}\noindent\hspace*{2\paragraphindent}
 while $(j\leq n \text{ and } c(j)\neq1)$\vspace{1mm}

{\em4:}\noindent\hspace*{3\paragraphindent}
 $j = j + 1$; \vspace{1mm}

{\em5:}\noindent\hspace*{2\paragraphindent}
if $(j = n + 1)$ \vspace{1mm}

{\em6:}\noindent\hspace*{3\paragraphindent}
$o(\bfc) = cnt + 1$;\vspace{1mm}

{\em7:}\noindent\hspace*{3\paragraphindent}
algorithm ends;\vspace{1mm}

{\em8:}\noindent\hspace*{2\paragraphindent}
\}

{\em9:}\noindent\hspace*{2\paragraphindent}
\textcolor{ForestGreen}{/*A $1$ is detected in $\bfc$.*/}\vspace{1mm}

{\em10:}\noindent\hspace*{2\paragraphindent}
let $\bfd = (0,\ldots,0)$ with length $\beta-1$;\vspace{1mm}

{\em11:}\noindent\hspace*{0\paragraphindent}
\textcolor{ForestGreen}{/*$\bfd$ is a vector storing $\beta-2$ bits to the left of the detected 1, appended with a 0.*/}\vspace{1mm}

{\em12:}\noindent\hspace*{2\paragraphindent}
if $(j\geq \beta-1)$ \vspace{1mm}

{\em13:}\noindent\hspace*{3\paragraphindent}
let $\bfd_1^{\beta-2} = \bfc_{j-\beta+2}^{j-1}$;\vspace{1mm}

{\em14:}\noindent\hspace*{2\paragraphindent}
else \textcolor{ForestGreen}{/*$j<\beta-1$*/} \vspace{1mm}

{\em15:}\noindent\hspace*{3\paragraphindent}
let $\bfd_{\beta-j}^{\beta-2} = \bfc_1^{j-1}$;\vspace{1mm}

{\em16:}\noindent\hspace*{2\paragraphindent}
find $k\in[1:M_{\beta-1}]$ such that $\bfs_k= \bfd $;
\vspace{1mm}


{\em17:}\noindent\hspace*{2\paragraphindent}
$cnt = cnt + \BFX(n-j+\beta-1,k)$;\vspace{1mm}

{\em18:}\noindent\hspace*{2\paragraphindent}
$i = j$; $j = i+1$;\vspace{1mm}

{\em19:}\noindent\hspace*{1\paragraphindent}
\}\vspace{1mm}

{\em20:}\noindent\hspace*{\paragraphindent}
$o(\bfc) = cnt + 1$; \vspace{1mm}

{\em21:}\noindent\hspace*{\paragraphindent}
algorithm ends.
\end{algr}
\begin{example}
Suppose we would like to decode a $(6,3)$-WWL vector $\bfc=(1 0 1 1 0 0 1 0 0 1)$ of length 10.
\begin{itemize}
\item A 1 is detected $({\color{red}{\boldsymbol 1}} 0 1 1 0 0 1 0 0 1)$, where $i = 0,j=1$. The decoder aims to find the number of vectors $\hat\bfc$ such that $(0 0 0 0 0 0 0 0 0 0) \preceq \hat{\bfc} \prec (1 0 0 0 0 0 0 0 0 0)$. Now $\bfd=(00000)=\bfs_1$, so $k=1$, and $n-j+\beta-1 = 14$. Therefore, $cnt = 0 + \BFX_{6,3,16}(14,1)=236$.

\item A 1 is detected $(1 0 {\color{red}{\boldsymbol 1}} 1 0 0 1 0 0 1)$, where $i = 1,j = 3$. The decoder aims to find the number of vectors $\hat\bfc$ such that $(1 0 0 0 0 0 0 0 0 0) \preceq \hat\bfc \prec (1 0 1 0 0 0 0 0 0 0)$. Here $\bfd=(00100)=s_5$, so $k=5$, and $n-j+\beta-1 = 12$. Therefore, $cnt = 236 + \BFX_{6,3,16}(12,5)=308$.


\item  A 1 is detected $(1 0  1 {\color{red}{\boldsymbol 1}} 0 0 1 0 0 1)$, where $i = 3,j = 4$. The decoder aims to find the number of vectors $\hat\bfc$ such that $(1 0 0 1 0 0 0 0 0 0) \preceq \hat\bfc \prec (1 0 1 1 0 0 0 0 0 0)$. Here $\bfd=(01010)=s_{11}$, so $k=11$, and $n-j+\beta-1 = 11$. Therefore, $cnt = 308 + \BFX_{6,3,16}(11,11)=343$.

\item A 1 is detected $(1 0  1 1 0 0 {\color{red}{\boldsymbol 1}} 0 0 1)$, where $i = 4,j = 7$. The decoder aims to find the number of vectors $\hat\bfc$ such that $(1 0 1 1 0 0 0 0 0 0) \preceq \hat\bfc \prec (1 0 1 1 0 0 1 0 0 0)$. Here $\bfd=(11000)=s_{23}$, so $k=23$, and $n-j+\beta-1 = 8$. Therefore, $cnt = 343 + \BFX_{6,3,16}(8,23)=351$.
\item Finally, a 1 is detected $(1 0  1 1 0 0 1 0 0 {\color{red}{\boldsymbol 1}})$, where $i = 7,j = 10$. The decoder aims to find the number of vectors $\hat\bfc$ such that $(1 0 1 1 0 0 1 0 0 0) \preceq \hat\bfc \prec (1 0 1 1 0 0 1 0 0 1)$. Here $\bfd=(01000)=s_{9}$, so $k=9$, and $n-j+\beta-1 = 5$. Therefore, $cnt = 351 + \BFX_{6,3,16}(5,9)=352$.
\end{itemize}
We calculate that $o(\bfc)= cnt + 1=353$ and $\bfc$ is decoded as 353.
\end{example}

\begin{theorem} \label{thm: complexity decoder}
Algorithm~\ref{algr:decode} calculates the order of a $(\beta,p)$-WWL vector of length $n$ in $\cS_n(\beta,p)$. Its time complexity and space complexity are both $O(n)$.
\end{theorem}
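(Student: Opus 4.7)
The plan is to establish two facts: that Algorithm~\ref{algr:decode} returns $o(\bfc)$ on input $\bfc\in \cS_n(\beta,p)$, and that both its running time and its storage are $O(n)$.

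For correctness, I would begin from the observation that any $\hat\bfc\prec \bfc$ has a unique first coordinate $j$ at which it differs from $\bfc$, and $\hat c_j<c_j$ forces $c_j=1$, $\hat c_j=0$. Letting $j_1<\cdots<j_t$ denote the positions of the $1$'s in $\bfc$, the sets
\[
A_r=\{\hat\bfc\in\cS_n(\beta,p):\hat c_i=c_i\text{ for } i<j_r,\ \hat c_{j_r}=0\},\qquad r=1,\dots,t,
\]
partition $\{\hat\bfc\in\cS_n(\beta,p):\hat\bfc\prec \bfc\}$, so $o(\bfc)=1+\sum_{r=1}^{t}|A_r|$. The outer loop of Algorithm~\ref{algr:decode} walks exactly over the positions $j_1,\dots,j_t$, so it suffices to show that the contribution added at step $r$, namely $\BFX(n-j_r+\beta-1,k_r)$, equals $|A_r|$.

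This identity is the main step. I would exhibit a bijection between $A_r$ and the set of $(\beta,p)$-WWL vectors of length $n-j_r+\beta-1$ whose length-$(\beta-1)$ prefix equals the state $\bfd=\bfs_{k_r}$ assembled in lines~10--15. When $j_r\geq \beta-1$ the bijection sends $\hat\bfc\mapsto(\hat c_{j_r-\beta+2},\ldots,\hat c_n)$; when $j_r<\beta-1$ it prepends $\beta-1-j_r$ zeros to $\hat\bfc$. In either case the first $\beta-1$ entries of the image coincide with $\bfd$ by construction (using $\hat c_i=c_i$ for $i<j_r$ and $\hat c_{j_r}=0$), while the remaining coordinates correspond one-to-one with the free suffix of $\hat\bfc$. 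The only subtle point is to verify that the padded image still satisfies the $(\beta,p)$-WWL constraint, which is immediate because prepending zeros cannot increase the weight of any length-$\beta$ window. By the definition of $\BFX$ we then have $|A_r|=x_{k_r}(n-j_r+\beta-1)=\BFX(n-j_r+\beta-1,k_r)$, closing the correctness argument.

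For the complexity, the pointer $j$ is nondecreasing and bounded by $n+1$, so the total work inside the inner while-loop (lines~3--4) is $O(n)$ comparisons. Each iteration of the outer loop that processes a detected $1$ performs $O(\beta)$ work to assemble $\bfd$, an $O(M_{\beta-1})$ lookup to identify $k$, and a single access to $\BFX$; since $\beta$ and $p$ are treated as constants, this is $O(1)$ per iteration, and there are at most $n$ iterations. The dominant storage cost is the table $\BFX_{\beta,p,n}$, which has $(n+\beta)\,M_{\beta-1}=O(n)$ entries and, by Lemma~\ref{lem:recursion}, can be constructed by $n$ multiplications of the fixed-size matrix $A_{\beta,p}$ against a running state vector, yielding the claimed $O(n)$ time and space.
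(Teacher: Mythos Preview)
Your proposal is correct and follows essentially the same approach as the paper: the paper's decomposition $o(\bfc)=1+\sum_i N(\bfc_{i-1},\bfc_i)$ with $\bfc_i=\bfc_{i-1}+\boldsymbol\delta_{j_i}$ is exactly your partition into the sets $A_r$, and both arguments reduce $|A_r|$ to the number of $(\beta,p)$-WWL vectors of length $n-j_r+\beta-1$ with the prescribed length-$(\beta-1)$ prefix $\bfd$, read off from $\BFX$. The complexity analyses also agree in substance; the paper bounds the number of detected $1$'s by $np/\beta$ and uses binary search for the prefix lookup, but these refinements are immaterial once $\beta$ and $p$ are fixed constants.
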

\begin{proof}

{\em Correctness}: Let $\bfc$ be the vector to decode; that is, we seek to find $ord(\bfc)$. For $\bfc_1\preceq\bfc_2$, we denote by $N(\bfc_1,\bfc_2)$ the number of vectors $\hat\bfc$ such that $\bfc_1\preceq\hat\bfc\prec\bfc_2$. Let $\bfc_1,\ldots,\bfc_L$ be a sequence of vectors such that $\textbf{0}=\bfc_0\preceq\bfc_1\preceq\bfc_2\preceq\cdots\preceq\bfc_L=\bfc$; then it is easy to see
\[
o(\bfc) = \sum_{i=1}^L N(\bfc_{i-1},\bfc_i)+1.
\]
Let $L$ be the number of 1's in $\bfc$; let all the indices of 1's be $j_1,j_2,\ldots,j_L$ in ascending order, that is $1\leq j_1<\cdots<j_L\leq n$ and $c_{j_1}=c_{j_2}=\cdots=c_{j_L}=1$. For $i\in\{1,\ldots,L\}$, $\bfc_i$ is chosen such that $\bfc_i = \bfc_{i-1} + \boldsymbol\delta_{j_i}$, where $\bfc_0 = \boldsymbol{0}$, and $\boldsymbol\delta_j$, $j\in\{1,\ldots,n\}$, denotes the vector where all entries are 0 except for the $j$-th entry, which is a 1. Here addition is component-wise modulo-2 summation.

Lines 3 and 4 together with Line 18 in Algorithm~\ref{algr:decode} scan $\bfc$ and find $\bfc_i$ according to $\bfc_{i-1}$. Therefore, we are left to prove that Algorithm~\ref{algr:decode} calculates $N(\bfc_{i-1},\bfc_i)$ for $i\in\{1,\ldots,L\}$.

By definition, the first $j_i-1$ digits of $\bfc_i$ and $\bfc_{i-1}$ are the same, and $c_{i,j_i}=1$ while $c_{i-1,j_i}=0$. Then a vector $\hat\bfc\in\{0,1\}^n$ satisfies $\bfc_{i-1}\preceq\hat\bfc\prec\bfc_i$ if and only if the first $j_i$ digits of $\hat\bfc$ are the same as those of $\bfc_{i-1}$, i.e. $\hat\bfc_{1}^{j_i} = \bfc_{i-1,1}^{j_i}$. Given the length and the first $j_i$ digits of $\hat\bfc$, the number of possible $\hat\bfc$ can be calculated based on the matrix $\BFX$ in the following way. Since the $(\beta,p)$-WWL constraint is local, if $j_i>\beta-1$, the task is equivalent to calculating the number of $\tilde\bfc$ with length $n-j_i+\beta-1$ such that the first $\beta-1$ digits are a prefix of $\hat\bfc$, in particular, $\tilde\bfc_{1}^{\beta-1} = \hat\bfc_{j_i-\beta+2}^{j_i}$; otherwise, for $j_i \leq \beta-1$, it is equivalent to calculating the number of $\tilde\bfc$ with length $n-j_i+\beta-1$ such that the first $\beta-1$ digits are zeros followed by length-$j_i$ prefix of $\hat\bfc$, that is, $\tilde\bfc_1^{\beta-1}=(\textbf{0}_{\beta-1-j_i},\hat\bfc_{1}^{j_i})$. Lines 10 -- 15 in Algorithm~\ref{algr:decode} find the first $\beta-1$ digits of $\tilde\bfc$ and Lines 16 and 17 calculate the number of $\tilde\bfc$, which is the number of vectors $\hat\bfc$ satisfying $\bfc_{i-1}\preceq\hat\bfc\prec\bfc_i$. Therefore, Algorithm~\ref{algr:decode} calculates $N(\bfc_{i-1},\bfc_i)$ for $i\in\{1,\ldots,L\}$ and sums them up to derive the order of $\bfc$.

{\em Time complexity analysis}: It can be seen from the algorithm that the decoder scans the vector that is to be decoded only once. Whenever the decoder detects a 1, it uses binary searches to find the corresponding prefix vector $\bfd$ in $\BFX$, while the number of 1's is no more than $\frac{np}{\beta}$. Therefore, the time complexity of the decoder is no more that $O(\frac{np}{\beta}\log M_{\beta-1})=O(\frac{np}{\beta}\log\sum_{i=0}^{p}\binom{\beta-1}{i})=O(n)$, where $\beta$ and $p$ are fixed integers and not related to $n$.

{\em Space complexity analysis}: The space complexity comes from the matrix $\BFX$ with $n+\beta-1$ rows and $M_{\beta-1}$ columns. Therefore, the space complexity is also $O(n)$ since $\beta$ and $M_{\beta-1}$ are both fixed integers.
\end{proof}

\subsubsection{Encoder}
The encoder follows a similar approach to map an integer $m\in\{1,\ldots,M_n\}$ to a vector $\bfc\in \cS_n(\beta,p)$, such that $o(\bfc)=m$. We call $\bfc$ the {\em encoded vector} for the message $m$.
Note that $\forall m_i,m_j\in\{1,\ldots,M_n\}, m_i\leq m_j$ if and only if $\bfc_i \preceq \bfc_j$, where $o(\bfc_i)=m_i$ and $o(\bfc_j)=m_j$.
The following encoding algorithm uses the matrix $\BFX$ to efficiently calculate the vector $\bfc\in\cS_n(\beta,p)$ such that $o(\bfc)=m$, for $m\in\{1,\ldots,M_n\}$. The algorithm has linear complexity.

\begin{algr} \label{algr:encode}
\textsc{Encoding: find $\bfc$ such that $o(\bfc)=m$}

\noindent\hspace*{\paragraphindent}
let $cnt = 0, \bfc = (0,\ldots,0)$ with length $n$;\vspace{1mm}

\noindent\hspace*{\paragraphindent}
for $i = 1,2,\ldots,n$ \{ \vspace{1mm}

\noindent\hspace*{2\paragraphindent}
let $\bft = \bfc$;\vspace{1mm}

\noindent\hspace*{2\paragraphindent}
let $t(i) = 1$;\vspace{1mm}

\noindent\hspace*{2\paragraphindent}
if $\bft$ satisfies $(\beta,p)$-WWL constraint \{\vspace{1mm}

\noindent\hspace*{3\paragraphindent}
let $\bfq = (0,\ldots,0)$ with length $\beta-1$;\vspace{1mm}

\noindent\hspace*{0\paragraphindent}
\textcolor{ForestGreen}{/*$\bfq$ is a vector storing $\beta-2$ bits to the left of $t(i)$ in $\bft$, appended with a 0.*/}\vspace{1mm}

\noindent\hspace*{3\paragraphindent}
if $(i\geq \beta-1)$ \vspace{1mm}

\noindent\hspace*{4\paragraphindent}
let $\bfq_1^{\beta-2}=\bft_{i-\beta+2}^{i-1}$;\vspace{1mm}

\noindent\hspace*{3\paragraphindent}
else \textcolor{ForestGreen}{/*$i<\beta-1$*/} \vspace{1mm}

\noindent\hspace*{4\paragraphindent}
let $\bfq_{\beta-i}^{\beta-2}=\bft_1^{i-1}$;\vspace{1mm}

\noindent\hspace*{3\paragraphindent}
find $k\in[1:M_{\beta-1}]$ such that $\bfs_k = \bfq$.\vspace{1mm}
%

\noindent\hspace*{3\paragraphindent}
let $CntTry = cnt + \BFX(n-i+\beta-1,k)$;

\noindent\hspace*{3\paragraphindent}
if $(CntTry + 1 = m)$ \{

\noindent\hspace*{4\paragraphindent}
$\bfc = \bft$;

\noindent\hspace*{4\paragraphindent}
return $\bfc$; algorithm ends;

\noindent\hspace*{3\paragraphindent}
\}

\noindent\hspace*{3\paragraphindent}
if $(CntTry + 1 < m)$ \{

\noindent\hspace*{4\paragraphindent}
let $\bfc(i) = 1$;

\noindent\hspace*{4\paragraphindent}
let $cnt = CntTry$;

\noindent\hspace*{3\paragraphindent}
\}

\noindent\hspace*{2\paragraphindent}
\}

\noindent\hspace*{\paragraphindent}
\}
\end{algr}

\begin{example}
Suppose we would like to encode one of $M_n = 421$ $(\beta=6,p=3)$-WWL vectors of length $n=10$. The message to be encoded is $m=353$.
\begin{itemize}
\item
$\bfc = (0000000000)$,  $i = 1$, $\bft = (1000000000)$, $\bfq = (00000)=s_1$, so $k =1$. Since $cnt = 0$, $CntTry = cnt + \BFX(n-i+\beta-1,k)=236<m-1$, so set $cnt = 236$.
\item
$\bfc = (1000000000)$, $i = 2$, $\bft = (1100000000)$, $\bfq = (00010)=s_3$, so $k=3$. Compute $CntTry = cnt + \BFX(n-i+\beta-1,k)=236+\BFX(13,3)=355>m-1$.
\item
$\bfc = (1000000000)$,  $i = 3$, $\bft = (1010000000)$, $\bfq = (00100) = s_5$, so $k=5$. Compute $CntTry = cnt + \BFX(n-i+\beta-1,k)=236+\BFX(12,5)=308<m-1$, so set $cnt = 308$.
\item
$\bfc = (1010000000)$,  $i = 4$, $\bft = (1011000000)$, $\bfq = (01010)=s_{11}$, so $k = 11$.  Compute $CntTry = cnt + \BFX(n-i+\beta-1,k)=308+\BFX(11,11)=343<m-1$, so set $cnt = 343$.
\item
$\bfc = (1011000000)$, $i = 5$, $\bft = (1011100000)$ does not satisfy $(6,3)$-WWL constraint.
\item
$\bfc = (1011000000)$, $i = 6$, $\bft = (1011010000)$ does not satisfy $(6,3)$-WWL constraint.
\item
$\bfc = (1011000000)$, $i = 7$, $\bft = (1011001000)$, $\bfq = (11000)=s_{23}$, so $k=23$. Compute $CntTry = cnt + \BFX(n-i+\beta-1,k)=343+\BFX(8,23)=351<m-1$, so set $cnt = 351$.
\item
$\bfc = (1011001000)$, $i = 8$, $\bft = (1011001100)$ does not satisfy $(6,3)$-WWL constraint.
\item
$\bfc = (1011001000)$,  $i = 9$, $\bft = (1011001010)$, $\bfq = (00100)=s_5$, so $k=5$. Compute $CntTry = cnt + \BFX(n-i+\beta-1,k)=351+\BFX(6,5)=353>m-1$.
\item
$\bfc = (1011001000)$,  $i = 10$, $\bft = (1011001001)$, $\bfq = (01000)=s_9$, so $k=9$. Compute $CntTry = cnt + \BFX(n-i+\beta-1,k)=351+\BFX(5,9)=352=m-1$. Therefore, $\bfc = \bft = (1011001001)$ and $o(\bfc)=353$.
\end{itemize}
\end{example}

\begin{theorem} \label{thm: complexity encoder}
Algorithm~\ref{algr:encode} encodes a message $m\in\{1,\ldots,M_n\}$ to a $(\beta,p)$-WWL vector $\bfc\in\cS_n(\beta,p)$ such that $o(\bfc)=m$, and its time complexity and space complexity are both $O(n)$.
\end{theorem}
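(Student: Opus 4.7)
The plan is to establish correctness by a loop invariant on the greedy bit-by-bit encoder, then to analyze time and space complexity in essentially the same way as in the proof of Theorem~\ref{thm: complexity decoder}.

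For correctness, I would argue by induction on $i$ that after iteration $i$ the following holds: (a) the current partial vector $\bfc_1^i$ is the length-$i$ prefix of the unique $\bfc^{\star}\in\cS_n(\beta,p)$ with $o(\bfc^{\star})=m$; and (b) the variable $cnt$ equals the number of vectors in $\cS_n(\beta,p)$ whose order is strictly smaller than the order of every completion of $\bfc_1^i$ to a length-$n$ $(\beta,p)$-WWL vector. The key step mirrors the decoder analysis: if we tentatively set $\bft(i)=1$ on top of $\bft_1^{i-1}=\bfc_1^{i-1}$, then the length-$n$ $(\beta,p)$-WWL completions of $\bfc_1^{i-1}$ having a $0$ at position $i$ are in bijection with the $(\beta,p)$-WWL vectors of length $n-i+\beta-1$ starting with the window $\bfq = \bfs_k$ read off positions $i-\beta+2,\ldots,i-1$ (padded with zeros if $i<\beta-1$ and appended with a $0$). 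By the same argument that justified Lemma~\ref{lem:recursion} and Lines~10--17 of Algorithm~\ref{algr:decode}, this count is exactly $\BFX(n-i+\beta-1,k)$. Hence $CntTry=cnt+\BFX(n-i+\beta-1,k)$ counts all $(\beta,p)$-WWL vectors whose order is strictly less than that of any completion with a $1$ at position $i$; comparing $CntTry+1$ to $m$ thus correctly decides whether $\bfc^{\star}(i)=0$ or $\bfc^{\star}(i)=1$, and whether we have already identified $\bfc^{\star}$.

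The main obstacle, and the point that needs a careful case analysis, is the handling of the test that $\bft$ satisfies the $(\beta,p)$-WWL constraint at Line~5. I would verify that when this test fails, every completion of $\bfc_1^{i-1}$ with a $1$ at position $i$ also violates the constraint, so the set counted by the would-be $\BFX(n-i+\beta-1,k)$ lookup is empty; consequently the encoder must set $\bfc(i)=0$ and leave $cnt$ unchanged, which is precisely what the algorithm does by skipping the body of the \textbf{if}. Combined with the tentative update rule when the test succeeds, this preserves invariant~(b), and the greedy choice preserves invariant~(a). At $i=n$, either the algorithm has already returned $\bfc^{\star}$ because $CntTry+1=m$, or the forced choices have produced $\bfc_1^n=\bfc^{\star}$ in the final iteration.

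For complexity, the outer loop executes exactly $n$ times. Each iteration performs a constant amount of work: extracting the window $\bfq$ from $\bft$, locating $k\in[1:M_{\beta-1}]$ with $\bfs_k=\bfq$ (which costs $O(\log M_{\beta-1})=O(1)$ since $\beta,p$ are fixed), a single matrix lookup in $\BFX$, one comparison, and possibly a constant-time update. Thus the total time is $O(n)$. The dominant storage cost is the precomputed matrix $\BFX\in\bfZ_+^{(n+\beta-1)\times M_{\beta-1}}$, which, since $M_{\beta-1}=\sum_{i=0}^{p}\binom{\beta-1}{i}$ is a constant, occupies $O(n)$ space; the working vectors $\bfc,\bft$ each use $O(n)$ space as well. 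This matches the stated bounds and parallels the complexity analysis of the decoder.
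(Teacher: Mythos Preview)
Your proposal is correct and follows essentially the same approach as the paper: correctness is argued by mirroring the decoder analysis (the paper in fact omits the details and simply refers back to Theorem~\ref{thm: complexity decoder}, whereas you spell out the loop invariant explicitly), and the space analysis is identical. Your time analysis is slightly cleaner than the paper's---you bound each of the $n$ iterations by $O(1)$ work, while the paper bounds the number of binary searches by the number of $1$'s, $\tfrac{np}{\beta}$, which is actually an undercount since the lookup is performed whenever the tentative $\bft$ satisfies the constraint; either way the conclusion $O(n)$ is the same.
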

\begin{proof}

{\em Correctness}: The proof of the correctness of the encoder is similar to the proof of the correctness of the decoder. Therefore, we omit the details.

{\em Time complexity analysis}: It can be seen from the algorithm that the encoder scans the vector from left to right once and tries to set each entry to 1. Whenever the encoder sets an entry to 1, it first determines whether the constraint is satisfied. This takes $O(1)$ steps since we do not have to check the entire vector but only the $\beta$ bits to the left of the set entry. Then it uses binary search to find the corresponding prefix vector in $\BFX$, while the number of 1's is no more than $\frac{np}{\beta}$. Therefore, the complexity of the encoder is no more that $O(\frac{np}{\beta}\log M_{\beta-1})=O(\frac{np}{\beta}\log\sum_{i=0}^{p}\binom{\beta-1}{i})=O(n)$, where $p$ and $\beta$ are fixed numbers.

{\em Space complexity analysis}: The matrix $\BFX$ is the primary contributor to the space complexity. As is shown in the proof of Theorem~\ref{thm: complexity decoder}, the space complexity is also $O(n)$.
\end{proof}

Note that Algorithm~\ref{algr:encode} and Algorithm~\ref{algr:decode} establish a one-to-one mapping between $\{1,\ldots,M_n\}$ and $\cS_n(\beta,p)$. Therefore the rate of the encoder is maximized. If the blocklength goes to infinity, the rate of the encoder approaches $C_{WWL}(\beta,p)$.

\section{}\label{app:B}
In this section, we present the proof of Theorem~\ref{thm:cap 1 beta p}. The reason for which the proof of Theorem~\ref{thm:cap 1 beta p} is non-trivial is the following. Suppose the cell-level vector is updated from $\bfc_{i-1}$ to $\bfc_i$ on the $i$-th write. The encoder has full knowledge of $\bfc_{i-1}$ and $\bfc_i$ since we assume there is no noise in the updating procedure. The decoder is required to recover $\bfc_i + \bfc_{i-1}$ with full knowledge of $\bfc_i$ but zero knowledge of $\bfc_{i-1}$. This is similar to the work on memories with defects in~\cite{HG83}, where the most interesting scenario is when the defect locations are available to the encoder but not to the decoder. In general, it can be modeled as a channel with states~\cite{GK11} where the side information on states is only available to the encoder.

\begin{proof}
First we introduce some definitions. Recall that $\cS_n(\beta,p)$ is defined as the set all $(\beta,p)$-WWL vectors of length $n$. ${\cS}_n(\beta,p)$ will be written as ${\cS}$ for short if no confusion about the parameters can occur. Let $V_n=\{0,1\}^n$ be the $n$-dimensional binary vector space.
\begin{definition} \label{defn:good}
For a vector $\bfx\in V_n$ and a set $\cS\subset V_n$, we define $\cS+\bfx=\{\bfs+\bfx| \bfs \in \cS\}$ and denote it by $\cS(\bfx)$. We call vectors in $\cS(\bfx)$ \textbf {reachable} by $\bfx$ and we say $\cS(\bfx)$ is \textbf {centered} at $\bfx$.

For two subsets $B_1,B_2\subset V_n$, we define $B_1+B_2 = \{\bfb_1+\bfb_2| \bfb_1\in B_1, \bfb_2\in B_2\}$. We call a subset $B\subset V_n$ \textbf{$\cS$-good} if
\[
 \cS + B = \bigcup_{\bfb\in B}\cS(\bfb) = V_n,
 \]
i.e., $V_n$ is covered by the the union of translates of $\cS$ centered at vectors in $B$.
\end{definition}

\begin{lemma} \label{lem:good transfer}
If $B\subset V_n$ is $\cS$-good, then $\bft+B$ is $\cS$-good, $\forall \bft\in V_n$.
\end{lemma}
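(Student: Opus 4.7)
The plan is to exploit the fact that $V_n$ with coordinate-wise modulo-$2$ addition is an abelian group, so that set-addition (as defined in Definition~\ref{defn:good}) is associative and commutative. Under this structure, translating the covering set $B$ by a fixed vector $\bft$ should simply translate the entire union of translates $\bigcup_{\bfb\in B}\cS(\bfb)$ by $\bft$, which leaves $V_n$ invariant.

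Concretely, I would first rewrite the quantity of interest using the Minkowski-style sum notation of the definition:
\[
\cS + (\bft + B) \;=\; \{\bfs + \bft + \bfb : \bfs \in \cS,\ \bfb \in B\}.
\]
By commutativity and associativity of $+$ in $V_n$, this set equals $\bft + (\cS + B)$. Since $B$ is assumed $\cS$-good, we have $\cS + B = V_n$ by Definition~\ref{defn:good}, and so the expression reduces to $\bft + V_n$.

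Finally, I would observe that for any fixed $\bft \in V_n$, the map $\bfv \mapsto \bft + \bfv$ is a bijection of $V_n$ onto itself (its inverse is itself, since we are working over $\mathbb{F}_2$), so $\bft + V_n = V_n$. Combining the steps yields $\cS + (\bft + B) = V_n$, which is exactly the condition for $\bft + B$ to be $\cS$-good.

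There is essentially no obstacle: the whole argument is a one-line manipulation in the abelian group $(V_n,+)$. The only thing to be careful about is to justify the rearrangement $\cS + (\bft + B) = \bft + (\cS + B)$ explicitly at the element level, rather than appealing to ambient algebraic properties the paper has not formally introduced for set-addition; writing out the representative $\bfs + \bft + \bfb$ makes this transparent.
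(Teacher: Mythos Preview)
Your argument is correct: the rearrangement $\cS+(\bft+B)=\bft+(\cS+B)=\bft+V_n=V_n$ is exactly the right one-line proof, and your element-level justification is sound. The paper itself does not prove this lemma at all---it simply states it and refers the reader to \cite{Cohen83} for similar results---so there is nothing to compare; your proposal fills in the omitted details in the obvious way.
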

\begin{lemma} \label{lem:covering}
If $B\subset V_n$ is $\cS$-good, then $\forall \bfx \in V_n$, $\exists \bfb\in B, \exists \bfs\in \cS$, such that $\bfx+\bfs=\bfb$.
\end{lemma}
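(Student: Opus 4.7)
The plan is to unpack the definition of $\cS$-good directly and exploit the fact that we are working over $\mathbb{F}_2^n$, where addition is its own inverse.

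First, I would invoke the hypothesis: since $B$ is $\cS$-good, Definition~\ref{defn:good} gives $\cS+B=\bigcup_{\bfb\in B}\cS(\bfb)=V_n$. Fix an arbitrary $\bfx\in V_n$. By the equality above, $\bfx$ lies in some translate $\cS(\bfb)=\cS+\bfb$, so there exist $\bfb\in B$ and $\bfs\in\cS$ with $\bfx=\bfs+\bfb$.

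Next, I would convert this identity into the stated form by adding $\bfs$ to both sides. Since $V_n=\{0,1\}^n$ is a vector space over $\mathbb{F}_2$, every element is its own additive inverse, and hence $\bfs+\bfs=\boldsymbol{0}$. Therefore $\bfx+\bfs=\bfs+\bfb+\bfs=\bfb$, which is the desired conclusion. There is no real obstacle here — the content of the lemma is essentially the definition of $\cS$-good rewritten in a form convenient for the later use in Theorem~\ref{thm:cap 1 beta p}, where the encoder, knowing the current state $\bfx$ (playing the role of $\bfc_{i-1}$), must locate some $\bfb\in B$ reachable from $\bfx$ via a valid $(\beta,p)$-WWL increment $\bfs$.
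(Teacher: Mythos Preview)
Your argument is correct and is exactly the natural one: unfold Definition~\ref{defn:good} to write $\bfx=\bfs+\bfb$ and use the characteristic-$2$ identity $\bfs+\bfs=\boldsymbol{0}$ to rearrange. The paper in fact omits the proof entirely, merely citing~\cite{Cohen83} for analogous results, so there is nothing to compare against beyond noting that your direct verification is precisely what the authors had in mind when they deemed it routine.
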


Lemma~\ref{lem:covering} guarantees that if $B\subset V_n$ is an $\cS$-good subset, then from any cell-state vector $\bfx$, there exists a $(\beta,p)$-WWL vector $\bfs$, such that $\bfx+\bfs\in B$. We skip the proofs of Lemma~\ref{lem:good transfer} and~\ref{lem:covering}, referring the reader to similar results and their proofs in~\cite{Cohen83}.

\begin{lemma} \label{lem:yielding constrained codes}
If $G_1,\ldots,G_{M}$ are pairwise disjoint $\cS$-good subsets of $V_n$, then there exists a $(1,\beta,p)$-constrained code of size $M$. In particular, if $G$ is an $\cS$-good $(n,k)$ linear code, then there exists a $(1,\beta,p)$-constrained code with rate $\frac{n-k}{n}$.
\end{lemma}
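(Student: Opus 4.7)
The plan is to build the encoder and decoder directly from the disjoint $\cS$-good family $G_1,\ldots,G_M$. The key observation is that Lemma~\ref{lem:covering} guarantees, for every current cell-state $\bfx\in V_n$ and every intended message $m$, the existence of a new cell-state vector in $G_m$ whose bitwise difference from $\bfx$ is a $(\beta,p)$-WWL vector, which is precisely what the single-write $(1,\beta,p)$-constraint demands.

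Concretely, I would define the encoder as follows: to write message $m\in\{1,\ldots,M\}$ from current state $\bfx$, apply Lemma~\ref{lem:covering} with $B=G_m$ to obtain some $\bfb_m\in G_m$ and $\bfs\in\cS$ with $\bfx+\bfs=\bfb_m$, and output $\bfb_m$ as the new state. Because $\bfs=\bfx+\bfb_m\in\cS=\cS_n(\beta,p)$, the set of positions that flip between the two consecutive states forms a $(\beta,p)$-WWL vector, so in every window of $\beta$ contiguous cells at most $p$ cells are rewritten, verifying the $(1,\beta,p)$-constraint on that write. Since the $G_i$ are pairwise disjoint, the decoder simply returns the unique index $i$ with $\bfb_m\in G_i$; this is well-defined on every state the encoder can produce. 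The same rule applies on every subsequent write without any state-dependence beyond the current cell-state, giving a code of size $M$ per write and hence rate $\log_2 M/n$ in the sense of Equation~(\ref{eq:rate2}).

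For the linear refinement, I would take $G_1,\ldots,G_M$ to be the $M=2^{n-k}$ cosets of the $(n,k)$ code $G$ in $V_n$. These are pairwise disjoint by definition, and by Lemma~\ref{lem:good transfer} each coset $\bft+G$ inherits the $\cS$-good property from $G$. Applying the first part then yields a $(1,\beta,p)$-constrained code of size $2^{n-k}$ and rate $(n-k)/n$.

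I do not anticipate a substantive obstacle, since essentially all the work has been front-loaded into Lemmas~\ref{lem:good transfer} and~\ref{lem:covering}. The only minor point to flag is that the encoder's choice of $\bfb_m$ need not be unique; any tie-breaking rule (e.g., lexicographically smallest $\bfs$) is fine. Also, although $\bigcup_i G_i$ need not cover $V_n$, this is harmless because the encoder only ever produces states inside $\bigcup_i G_i$, so the decoder is never confronted with an ambiguous input.
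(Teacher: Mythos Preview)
Your proposal is correct and essentially identical to the paper's proof: both use Lemma~\ref{lem:covering} to define the encoder by moving from the current state $\bfx$ into the target set $G_m$ via a shift in $\cS$, decode by the unique index of the set containing the new state, and for the linear case invoke Lemma~\ref{lem:good transfer} on the $2^{n-k}$ cosets of $G$. The extra remarks you make about tie-breaking and the decoder's domain are sound but not needed for the argument.
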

\begin{proof}
If $G_i$ is $\cS$-good for all $i\in[1:M]$, then from Lemma~\ref{lem:covering}, for any $\bfx\in V_n$ and $i\in[1:M]$, there exist $\bfg_i\in G_i$ and $\bfs_i\in\cS$, such that $\bfx + \bfs_i = \bfg_i$. Suppose the current cell-state vector is $\bfx$, then we can encode the message $i\in[1:M]$ as a vector $\cE(i,\bfx) = \bfx + \bfs_i\in G_i$, for some $\bfs_i\in \cS$. The decoder uses the mapping $\cD(\bfx) = i$, if $\bfx\in G_i$, to give an estimate of $i\in[1:M]$. This yields a $(1,\beta,p)$-constrained code of size $M$.

If $G_1,\ldots,G_{2^{n-k}}$ represent the cosets of an $\cS$-good $(n,k)$ linear code $G$, then each coset is $\cS$-good according to Lemma~\ref{lem:good transfer}. The rate of the resulting $(1,\beta,p)$-constrained code is $\frac{\log_2(2^{n-k})}{n}=\frac{n-k}{n}$.
\end{proof}

 Now we are ready to prove Theorem~\ref{thm:cap 1 beta p}.

Let $B_j$ be a randomly chosen $(n,j)$ linear code with $2^j$ codewords ($B_0=\{\boldsymbol{0}\}$), and let $m_{B_j}=|V_n/(B_j+\cS)|$ be the number of vectors not reachable from any vector in $B_j$. Let $\bfx\in V_n$ be a randomly chosen vector and let $Q_{B_j}$ be the probability that $\bfx\notin B_j+\cS$. Then we have
\[m_{B_j}=2^n Q_{B_j}.\]

The proof of the following lemma is based upon ideas discussed in~\cite[pp.~201-202]{Berger98}.
\begin{lemma}\label{lem:Q}
There exists a linear code $B_j$ such that
\[Q_{B_j}\leq Q_{B_0}^{2^j}.\]
\end{lemma}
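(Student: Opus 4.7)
The plan is to prove the statement by induction on $j$. The base case $j=0$ holds with equality, since $Q_{B_0}=Q_{B_0}^{2^0}$. For the inductive step, assume an $(n,j)$ linear code $B_j$ with $Q_{B_j}\le Q_{B_0}^{2^j}$ has been constructed, and I will exhibit an $(n,j+1)$ linear code $B_{j+1}$ satisfying $Q_{B_{j+1}}\le Q_{B_j}^{2}$; chaining the two inequalities yields $Q_{B_{j+1}}\le Q_{B_0}^{2^{j+1}}$ and closes the induction.

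The construction is the obvious one: pick some $\bfv\notin B_j$ and set $B_{j+1}=B_j\cup(B_j+\bfv)$, which is a linear subspace of dimension exactly $j+1$. Let $T=V_n\setminus(B_j+\cS)$, so that $|T|=2^n Q_{B_j}$. A vector $\bfx$ lies outside $B_{j+1}+\cS$ iff both $\bfx\in T$ and $\bfx+\bfv\in T$, hence $Q_{B_{j+1}}=|T\cap(T+\bfv)|/2^n$, and the task reduces to finding $\bfv\notin B_j$ with $|T\cap(T+\bfv)|\le |T|^2/2^n$.

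I would then use a simple averaging argument. Double counting yields
$$\sum_{\bfv\in V_n}|T\cap(T+\bfv)|=|T|^2.$$
For any $\bfv\in B_j$ one has $B_j+\bfv=B_j$, so $T+\bfv=T$ and $|T\cap(T+\bfv)|=|T|$; these terms contribute $2^j|T|$ to the sum. Therefore the average of $|T\cap(T+\bfv)|$ over $\bfv\in V_n\setminus B_j$ equals
$$\frac{|T|^2-2^j|T|}{2^n-2^j}=\frac{|T|(|T|-2^j)}{2^n-2^j},$$
and cross-multiplying with $|T|\le 2^n$ shows this is at most $|T|^2/2^n$. Consequently some $\bfv\notin B_j$ achieves $|T\cap(T+\bfv)|/2^n\le Q_{B_j}^{2}$, giving the claimed bound on $Q_{B_{j+1}}$.

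The only delicate point (rather than a true obstacle) is insisting that the new generator $\bfv$ lie strictly outside $B_j$ so that $B_{j+1}$ really has dimension $j+1$. Averaging $|T\cap(T+\bfv)|$ over all of $V_n$ would give $|T|^2/2^n$ exactly, but excluding the $\bfv\in B_j$ terms, each of which equals the \emph{larger} quantity $|T|$, can only lower the average; this is precisely the content of the inequality above and ensures that requiring $\bfv\notin B_j$ does not degrade the bound.
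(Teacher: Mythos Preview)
Your proof is correct and follows essentially the same route as the paper's: an inductive construction that at each step adds a generator $\bfv$ chosen via an averaging/double-counting argument so that $Q_{B_{j+1}}\le Q_{B_j}^2$. The only cosmetic difference is that you track the uncovered set $T=V_n\setminus(B_j+\cS)$ while the paper tracks the covered set $S_{B_j}=B_j+\cS$; your explicit computation of the average over $\bfv\notin B_j$ is in fact cleaner than the paper's remark that including $\bfv\in B_j$ only overestimates the average.
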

\begin{proof}
Let $B_j=\{\bfy_1,\ldots,\bfy_{2^j}\}$ denote an $(n,j)$ linear code. If
\[S_{B_j} = B_j + \cS ,\]
then
\[ Q_{B_j} = 1-2^{-n}N_{B_j},\]
where $N_{B_j}=|S_{B_j}|$.

Let $\bfz \notin B_j$ and let $B_{j+1,\bfz}$ be the $(n,j+1)$ linear code formed by $(\bfz+B_j) \cup B_j$. It can be seen $B_{j+1,\bfz}$ comprises the $2^j$ vectors in $B_j$ plus $2^j$ new vectors of the form $\bfz+\bfy$, $\bfy\in B_j$. Let
\begin{align*}
S_{B_j,\bfz}^*& = \bfz + S_{B_j}
\end{align*}
It can be seen that $S_{B_j,\bfz}^*$ has the same cardinality as $S_{B_j}$. Therefore, it contains $N_{B_j}$ vectors, too, some of which may already belong to $S_{B_j}$. Since $S_{B_{j+1,\bfz}}=S_{B_j} \cup S_{B_j,\bfz}^*$, we have
\[
N_{B_{j+1,\bfz}}=2N_{B_j}-\big| S_{B_j} \cap S_{B_j,\bfz}^*\big|.
\]
Thus $N_{B_{j+1,\bfz}}$ is maximized by choosing $\bfz$ that minimizes $| S_{B_j} \cap S_{B_j,\bfz}^*|$.

Let us now calculate the average of $| S_{B_j} \cap S_{B_j,\bfz}^*|$ over all $\bfz\in V_n$. Here all $\bfz\in B_j$ are also considered since they will result in an overestimate of the average of $| S_{B_j} \cap S_{B_j,\bfz}^*|$. Then
\begin{align*}
\sum_{\bfz\in V_n} | S_{B_j} \cap S_{B_j,\bfz}^*| &= \sum_{\bfz\in V_n} \sum_{\bfx\in S_{B_j}}\textbf{1}_{\{\bfx\in S_{B_j,\bfz}^*\}} \\
&=\sum_{\bfx\in S_{B_j}}\sum_{\bfz\in V_n} \boldsymbol{1}_{\{\bfx\in S_{B_j,\bfz}^*\}}\\
&\stackrel{\textcircled{\scriptsize1}}{=}\sum_{\bfx\in S_{B_j}}\sum_{\bfz\in \bfx + S_{B_j}} 1\\
&\stackrel{\textcircled{\scriptsize2}}{=}\sum_{\bfx\in S_{B_j}}N_{B_j}\\
&= N_{B_j}^2,
\end{align*}
where $\boldsymbol{1}_A$ is the indicator function of the event $A$, i.e., $\boldsymbol{1}_A=1$ if $A$ is true and $\boldsymbol{1}_A=0$ otherwise.

Equality \textcircled{\scriptsize1} holds since, for a fixed $\bfx$, if $\bfz\in \bfx + S_{B_j}$, then $\bfx\in S_{B_j,\bfz}^*$ and vice versa. Equality \textcircled{\scriptsize2} holds since $|\bfx+S_{B_j}|=|S_{B_j}|=N_{B_j}$.
Thus, the average value of $| S_{B_j} \cap S_{B_j,\bfz}^*|$ is $2^{-n}N_{B_j}^2$. Since the minimum of $| S_{B_j} \cap S_{B_j,\bfz}^*|$ cannot exceed this average, we conclude that there exists $\bfz\in V_n$, such that $| S_{B_j} \cap S_{B_j,\bfz}^*|\leq N_{B_j}^2$. Then there exists $B_{j+1}$, such that
\[
N_{B_{j+1}} \geq 2N_{B_j}-2^nN_{B_j}^2.
\]
Thus,
\begin{align*}
Q_{B_{j+1}} &= 1-2^{-n}N_{B_{j+1}}\\
&\leq 1 - 2^{-n}(2N_{B_j}-2^{-n}N_{B_j}^2)\\
&=(1-2^{-n}N_{B_j})^2\\
&=Q_{B_j}^2.
\end{align*}
It follows that there exists $B_j$, such that $Q_{B_j}\leq Q_{B_0}^{2^j}$.
\end{proof}
\begin{lemma}\label{lem:good code exist}
If $j\geq n-\log |\cS| + \log n$, then there exists $B_j$ such that $m_{B_j}<1$.
\end{lemma}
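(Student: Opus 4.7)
The plan is to combine the exponential-in-$j$ decay provided by Lemma~\ref{lem:Q} with the elementary inequality $(1-x)^{k}\leq e^{-kx}$ valid for $x\in[0,1]$. First, since $B_{0}=\{\boldsymbol{0}\}$ one has $S_{B_{0}}=\cS$ and hence $Q_{B_{0}}=1-|\cS|/2^{n}$. By Lemma~\ref{lem:Q}, there exists an $(n,j)$ linear code $B_{j}$ such that
$$Q_{B_{j}}\leq \Big(1-\frac{|\cS|}{2^{n}}\Big)^{2^{j}}\leq \exp\Big(-\frac{2^{j}|\cS|}{2^{n}}\Big).$$

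Next, I would rewrite the hypothesis $j\geq n-\log|\cS|+\log n$ (with logarithms in base $2$) in its equivalent multiplicative form $2^{j}|\cS|/2^{n}\geq n$. Substituting into the displayed bound yields $Q_{B_{j}}\leq e^{-n}$, and therefore
$$m_{B_{j}}=2^{n}Q_{B_{j}}\leq \Big(\frac{2}{e}\Big)^{n}<1,$$
since $2/e<1$ for every $n\geq 1$. Because $m_{B_{j}}$ is a nonnegative integer, this in fact forces $m_{B_{j}}=0$, so $B_{j}$ is $\cS$-good, which is the feature needed when this lemma is fed into the main theorem via Lemma~\ref{lem:yielding constrained codes}.

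The calculation itself is short and the only real work is already packaged into Lemma~\ref{lem:Q}, whose doubling trick squares $Q_{B_{j}}$ at each increment of $j$. Hence there is no essential obstacle here; the one design choice worth pointing out is the $+\log n$ correction in the hypothesis. Without it, one would obtain only $m_{B_{j}}\leq 2^{n}/e$, which is vacuous, whereas the logarithmic slack is precisely what is needed to cancel the $2^{n}$ prefactor and deliver the exponentially small bound $(2/e)^{n}$. This logarithmic overhead is also what allows the rate $(n-j)/n$ of the resulting $(1,\beta,p)$-constrained code to approach $\log_{2}|\cS_{n}(\beta,p)|/n\to C_{WWL}(\beta,p)$ as $n\to\infty$, giving the converse half of Theorem~\ref{thm:cap 1 beta p}.
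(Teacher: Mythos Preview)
Your proof is correct and essentially identical to the paper's: both compute $Q_{B_0}=1-2^{-n}|\cS|$, invoke Lemma~\ref{lem:Q} to get $Q_{B_j}\le(1-2^{-n}|\cS|)^{2^j}$, use the hypothesis to force the exponent to be at least $n\cdot 2^n/|\cS|$, apply $(1-x)^{1/x}<e^{-1}$ to obtain $Q_{B_j}\le e^{-n}$, and conclude $m_{B_j}=2^nQ_{B_j}<1$. One small wording slip in your closing commentary: this lemma supplies the \emph{achievability} (lower-bound) direction of Theorem~\ref{thm:cap 1 beta p}, not the converse.
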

\begin{proof}
Note that $Q_{B_0} = 1 - 2^{-n}\cdot N_{B_0} \leq 1-2^{-n}\cdot|\cS|$. Then there exists $B_j$, such that
\begin{align*}
Q_{B_j} &\leq Q_{B_0}^{2^j} \\
&\leq (1-2^{-n}|\cS|)^{2^j}\\
&\leq (1-2^{-n}|\cS|)^{2^{n-\log |\cS| + \log n}}\\
&= (1-2^{-n}|\cS|)^{2^{n}|\cS|^{-1}\cdot n}\\
&< e^{-n} < 2^{-n}.
\end{align*}
Then $m_{B_j}=2^n Q_{B_j} < 1$.
\end{proof}

Since $m_{B_j}$ is an integer and $m_{B_j}<1$, there exists an $(n,j)$ linear code $B_j$ such that $m_{B_j}=0$, i.e., an $\cS$-good $B_j$ exists. According to Lemma~\ref{lem:yielding constrained codes}, there exists a sequence of $(1,\beta,p)$-constrained codes of length $n$ and rate $R_n(1,\beta,p)$ such that
\begin{align*}
\sup_{n} R_n(1,\beta,p)&\geq \lim_{n\rightarrow \infty} \frac{{n-(n-\log |\cS| + \log n)}}{n}\\
&= \lim_{n\rightarrow \infty} \frac{\log |\cS| - \log n}{n}\\
&= \lim_{n\rightarrow \infty} \frac{\log |\cS|}{n}\\
&= C_{WWL}(\beta,p).
\end{align*}
We have seen in Theorem~\ref{th:upper} that $C(1,\beta,p)\leq C_{WWL}(\beta,p)$. This concludes the proof that $C(1,\beta,p)= C_{WWL}(\beta,p)$
\end{proof}

\bibliographystyle{IEEEtrans}
\bibliography{reference}
\end{document}